\newtheorem{thm}{Theorem}[section]
\newtheorem{lem}[thm]{Lemma}
\newtheorem{rem}[thm]{Remark}
\newcommand{\be}{\begin{equation}}
\newcommand{\ee}{\end{equation}}
\newcommand{\delay}{{D}} 
\newcommand{\band}{{\mathcal{S}}} 
\newcommand{\toeplitz}{{T}} 
\begin{document}

\title{Explicit Computations for Delayed Semistatic Hedging}
\author{Yan Dolinsky\footnotemark[1] \, and Or Zuk\footnotemark[2]}
\date{\today}
\markboth{Y.Dolinsky and O.Zuk}{Explicit Computations for Delayed Semistatic Hedging}
\maketitle
\renewcommand{\thefootnote}{\fnsymbol{footnote}}
\footnotetext[1]
{Department of Statistics, Hebrew University of Jerusalem.
 \email{yan.dolinsky@mail.huji.ac.il}.}
\footnotetext[2]{Department of Statistics, Hebrew University of Jerusalem.
\email{or.zuk@mail.huji.ac.il}.}
\footnotetext[3] {YD supported by the ISF grant 230/21 and OZ supported by the ISF grant 2392/22.}

\renewcommand{\thefootnote}{\arabic{footnote}}

\pagenumbering{arabic}

\begin{abstract}
In this work we 
consider the exponential utility maximization problem in the framework of semistatic hedging.
In addition to the usual setting considered in Mathematical Finance, 
we also consider an investor who is informed about 
the risky asset’s price changes with a delay. 
In the case where the stock increments are i.i.d.
and normally distributed 
we compute explicitly the value of the problem and the 
corresponding optimal hedging strategy in a discrete and continuous-time setting. 
In discrete time, our approach is based on duality theory and tools from Linear Algebra
which are related to banded matrices and Toeplitz matrices. 
Next, we study an analogous continuous-time model where it is possible to trade at fixed equally spaced times. In this model, we compute the scaling limit for both the trading strategy and the achieved value, as the frequency of the trading events increases. 
Finally, we prove that this scaling limit coincides with the continuous-time hedging problem. 
\end{abstract}
\vspace{10pt}

\begin{keywords}
Utility Maximization, Hedging with Delay, Semistatic Hedging
\end{keywords}

\vspace{10pt}
\section{Introduction} \label{sec:intro}
We consider a financial market having one risky asset with i.i.d. increments which are 
normally distributed. 
We assume that in addition to trading the stock, the investor
is allowed to take static positions in vanilla options written on the
underlying asset at the maturity date. The prices of the vanilla options are initially known. 
The financial motivation for this
assumption is that vanilla options such as call options are liquid and hence should
be treated as primary assets whose prices are given in the market.

The ``usual'' setting of semistatic hedging 
(see for instance, \cite{hobson1998robust, cox2011robust,dolinsky2014martingale, bouchard2015arbitrage, acciaio2016model, bayraktar2016arbitrage, fahim2016model, guo2017tightness, bartl2019exponential} and the references there in) assumes that the investor 
makes her decision with full access to the prices of the assets.
In this work we consider the case where the investment decisions
may be based only on delayed information, and the actual present asset price is
unknown at the time of decision making. This corresponds to the case where there
is a time delay in receiving market information (or in applying it), which causes
the trader’s filtration to be delayed with respect to the price filtration.
In general, optimal investment problems with delayed filtrations are non Markovian, and so 
the computation
of prices and hedges becomes rather difficult as documented by previous work in the field 
(see \cite{schweizer1994risk, Pham:01, KX:07, mania2008mean, schweizer2018dynamic, saporito2019stochastic, BD:2021, cartea2023optimal}).

In this paper we continue developing and sharpening the approach from 
our recent note \cite{dolinsky2023exponential}
which provides a connection between exponential utility maximization
in the Gaussian framework and 
matrix decompositions. Our first result (Theorem \ref{thm1.1}) establishes an explicit solution 
to the delayed semistatic hedging problem in a discrete-time model with  i.i.d. Gaussian increments.
This result is formulated in Section \ref{sec:2}.
The proof of Theorem \ref{thm1.1} (given in section \ref{sec:3}) is based on the dual approach to the utility maximization problem 
and tools from Linear Algebra which are related to banded matrices 
(see \cite{BF:81}) and
Toeplitz matrices (see \cite{rodman1992inversion}). 

More precisely, we formulate a general verification result (Lemma \ref{lem2.1}) for the 
exponential utility maximization in the semistatic hedging framework. 
In the case where the increments are i.i.d. and normal, finding a solution that satisfies this 
verification result is equivalent to finding a matrix $A$ with three properties: First, $A^{-1}$ is banded, this is closely related to the analog of martingale measures in the setup of hedging with delay.
Second, the sum of the elements of $A$ is a given number which comes from the market 
modelling of the (normal) distribution of the risky asset at the maturity date. 
Third, $A-I$ admits a decomposition as a sum of a constant matrix 
and a banded matrix. We show that the (unique) matrix $A$ satisfying all three properties is a Toeplitz matrix and compute it explicitly. Both the trading strategy and the obtained value of the investor are given in terms of the matrix $A$. 

In Section \ref{sec:4} we treat the Bachelier model. 
We assume 
that the risky asset is 
observed with a constant delay $H>0$.
First, in Section \ref{section4.1} we consider the discrete-time framework  where dynamic hedging is done on an equidistant set of times.
In addition to trading the stock 
the investor allows to take static positions in vanilla options written on the
underlying asset at the maturity date. For the exponential utility maximization problem 
we establish in Theorem \ref{thm4.1} 
the continuous-time limit, i.e. the limit where the number of trading times goes to infinity.
The proof of this result is given in Section \ref{sec:5} and is based on Theorem 
\ref{thm1.1}. Next, in Theorem \ref{thm.100} we prove that 
the continuous-time hedging problem coincides with the scaling limit given by Theorem \ref{thm4.1}.
The lower bound for the continuous-time problem follows immediately from 
 the Fatou Lemma and Theorem \ref{thm4.1}. The proof of the upper bound is more technical and requires additional 
machinery which 
is based on duality theory and 
the theory of 
Radon-Nikodym derivatives of Gaussian measures (see \cite{H:68}). 
This is done in Section \ref{sec:6}.

We end this section with a discussion of our model. 
The realistic discrete-time setup is the following. The investor 
can trade dynamically the stock, in our case with delayed information. 
In addition she has a finite number of European call/put options
which can bought at the initial moment of time for a given (deterministic) price.
Our goal in this work is to provide an explicit solution for the exponential utility maximization problem, by applying various mathematical tools.
For this reason we consider several idealizations of the market model. 

The first idealization is that at time zero, the investor is able to buy any call option with strike 
$K\geq 0$ for the price 
$\int \left(x-K\right)^{+}d\hat\nu(x)$, where $\hat\nu$ is a given probability measure. 
The measure $\hat\nu$ is assumed to be derived from observed call prices that are liquidly traded in the market. Then, by a rough approximation argument we simplify the model
and assume that for any continuous function $f$ 
 we can buy an option with the random payoff $f(S_n)$ for the amount $\int f d\hat\nu$ (provided that the last term is well defined). 
Finally, in order to make the model tractable we assume that $\hat\nu$ is a normal distribution. 

It is natural to inquire how the normal distribution $\hat\nu$ 
is related to the (normal) distribution of the terminal stock price 
 with respect to the physical probability measure. In this paper we follow the no-arbitrage condition. 
 Consequently, in the discrete-time setup the only assumption we make is that the mean value of the distribution $\hat\nu$ is equal to the initial stock price. 
 In the continuous-time model, for strictly positive delay 
the same assumption is sufficient to guarantee no-arbitrage. For the continuous-time model with vanishing delay (i.e. $H=0$) 
we obtain the complete Bachelier model. To prevent arbitrage in this scenario,  
the distribution $\hat\nu$ must correspond to the unique martingale measure. More specifically, in order to avoid arbitrage for vanishing delay, 
the variance 
of the distribution $\hat\nu$ should be equal to the variance 
of the terminal asset price (see Remark \ref{rem.new}).

\section{The Discrete-Time Model}\label{sec:2}
Let $(\Omega,\mathcal F,\mathbb P)$ be a complete 
probability space carrying one risky asset which we denote by $S=(S_k)_{0\leq k\leq n}$ where 
$n\in\mathbb N$ is a fixed finite time horizon. We assume that the investor has a bank account that,
for simplicity, bears no interest. We assume that the initial stock price $S_0$ is a known constant 
and the stock increments 
$X_k:=S_k-S_{k-1}$, $k=1,...,n$ are i.i.d. normal variables with mean $\mu\in\mathbb R$ and 
variance equal to $\sigma^2$ for some $\sigma>0$. 
 
We fix a nonnegative integer number $\delay\in\mathbb Z_{+}$ and consider a situation where
there is a delay of $\delay$ trading times. Without loss of generality we assume that $\delay<n$.
 Hence, the investor's flow of information is given by the filtration 
$\mathcal G_k:=\sigma\left\{S_0,...,S_{\left(k-\delay\right)\vee 0}\right\}$, $k=0,1,...,n$. 
The case $\delay=0$ means that there is no delay and corresponds 
to the usual setting. 

We consider the following semistatic hedging setup.
There are two investment opportunities.
One is trading dynamically the stock at times $0,1,...,n-1$
 and the other is taking a static position in $S_n$, i.e. for any continuous function $f:\mathbb R\rightarrow\mathbb R$
 we can buy an option with the random payoff $f(S_n)$ for the amount $\int f d\hat\nu$ where $\hat\nu$ is a given measure (determined by the market).
 We assume that $\hat\nu$ is a normal distribution. 
Observe that 
the market allows to buy one stock for the price $S_0$ and so, in order to avoid an obvious arbitrage we should require 
$\int z d\hat\nu(z)=S_0$. Hence, we assume that $\hat\nu=\mathcal N(S_0, n\hat\sigma^2)$ for some $\hat\sigma>0$. 

Formally, a trading strategy is a pair $\pi=(\gamma,f)$ where $\gamma=(\gamma_1,...,\gamma_n)$ is a predictable process with respect
to $\{\mathcal G_k\}_{k=0}^n$ and $f:\mathbb R\rightarrow\mathbb R$ is a continuous function such that 
$\int |f| d\hat\nu<\infty$.
Denote by $\mathcal A$ the set of all trading strategies.
For $\pi=(\gamma,f)\in\mathcal A$ the corresponding portfolio value
at the maturity date is given by 
\begin{equation}\label{eq:value}
V^{\pi}_n=f(S_n)+\sum_{i=1}^n \gamma_i X_i-\int fd\hat\nu.
\end{equation}

\begin{rem}\label{rem2}
We say that two trading strategies $\pi=(\gamma,f)$ and $\tilde\pi=(\tilde\gamma,\tilde f)$ are equivalent if there exists a constant
$c\in\mathbb R$ such that $\gamma-\tilde\gamma\equiv c$ and 
$\tilde f-f$ is a linear function with slope equal to $c$. 
Clearly if $\pi=(\gamma,f)$ and $\tilde\pi=(\tilde\gamma,\tilde f)$ are equivalent
then $V^{\pi}_n=V^{\tilde\pi}_n$. Let us argue that the opposite direction also holds. Indeed, let 
$\pi=(\gamma,f)$ and $\tilde\pi=(\tilde\gamma,\tilde f)$ be two trading strategies 
such that $V^{\pi}_n=V^{\tilde\pi}_n$. Then, from (\ref{eq:value}) we obtain that the random variable 
$f(S_n)-\tilde f(S_n)+\left(\gamma_{n}-\tilde\gamma_{n}\right)S_n$ is 
 measurable with respect to $\sigma\{S_0,...,S_{n-1}\}$. Since the support of $S_n$ given 
 $\sigma\{S_0,...,S_{n-1}\}$ contains more than two values (it is all the real line) we conclude that 
 $f-\tilde f$ is a linear function and so $\pi=(\gamma,f)$ and $\tilde\pi=(\tilde\gamma,\tilde f)$ are equivalent. 
\end{rem}

Next, the investor’s preferences are described by an exponential utility function 
$u(x)=-\exp(- x)$, $x\in\mathbb R$
and her goal is to
\begin{equation}\label{1.2}
\mbox{Maximize} \ \ \mathbb E_{\mathbb P}\left[-\exp\left(- V^{\pi}_n\right)\right] \ \ \mbox{over} \ \ \pi\in\mathcal A
\end{equation}
where $\mathbb E_{\mathbb P}$ denotes the expectation with respect to the 
market probability measure $\mathbb P$. Clearly, 
for any portfolio strategy 
$\pi$ and a constant $\lambda\in\mathbb R$, 
$V^{\lambda\pi}_n=\lambda V^{\pi}_n$. 
Thus, without loss of generality, we take 
the absolute risk aversion parameter to be equal $1$. 
The following theorem is the cornerstone of the paper. 
\begin{thm}\label{thm1.1}
Fix $n,D,\mu,\sigma,\hat\sigma$. Let 
\begin{equation}\label{eq:a_sol}
a:=\begin{cases}
			\frac{\sigma^2}{2n\hat\sigma^2}+\frac{\sqrt{\left(2D+1-\frac{D(D+1)\sigma^2}{n\hat\sigma^2}\right)^2- 4 (D+1)D \left(1-\frac{\sigma^2}{\hat\sigma^2}\right)}-2D-1}{2 D (D+1)}, & \text{if \ $D\neq 0$}\\
 \frac{\sigma^2}{\hat\sigma^2}-1, & \text{if \ $D=0$}
		 \end{cases}
\end{equation}
and define the sequence $\{b_i\}_{i=1}^{\infty}$
by recursion ($D>0$)
\begin{equation}\label{1.4}
b_1=...=b_{D}=a, \ \ \mbox{and} \ \ b_i=\frac{a }{a D+1} \sum_{j=1}^{D} b_{i-j}\ \ \mbox{for} \ \ i>D,
\end{equation}
for $D=0$ we set $b_i=0$ for all $i>0$.

Then, $a>-\frac{1}{D+1}$ and so, the above sequence is well defined. 
The maximizer $\pi^{*}=(\gamma^{*},f^{*})$ for the optimization problem (\ref{1.2}) is unique (up to equivalency) and is given by 
\begin{equation}\label{1.5}
\gamma^{*}_i=\frac{\mu}{\sigma^2}+\frac{1}{\sigma^2}\sum_{j=1}^{i-1} (b_{i-j}-a) X_{j} \ \ \mbox{and} \ \ f^{*}(S_n):=\frac{a}{2\sigma^2}(S_n-S_0)^2.
\end{equation}
The corresponding value is given by 
\begin{align}\label{1.6}
u(n,D,\mu,\sigma,\hat\sigma) &:=\sup_{\pi\in\mathcal A}\mathbb E_{\mathbb P}\left[-\exp\left(- V^{\pi}_n\right)\right]=\mathbb E_{\mathbb P}\left[-\exp\left(- V^{\pi^*}_n\right)\right]\nonumber\\
&=-\exp\left(\frac{1}{2\sigma^2} n(a\hat\sigma^2-\mu^2)\right)
\sqrt{\frac{\big(1+ \delay a\big)^{n-\delay-1}}{\big(1+(\delay+1) a\big)^{n-\delay}}}.
\end{align}
\end{thm}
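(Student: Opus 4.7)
The plan is to guess the dual solution and apply the verification Lemma~\ref{lem2.1} to the candidate $\pi^{*}=(\gamma^{*},f^{*})$. The lemma reduces optimality to the construction of a probability measure $\mathbb Q^{*}\ll\mathbb P$ with three features: its density is proportional to $\exp(-V^{\pi^{*}}_{n})$; it is a \emph{delayed martingale measure}, meaning $\mathbb E_{\mathbb Q^{*}}[X_{i}\mid \mathcal G_{i-1}]=0$ for $i=1,\dots,n$; and its marginal at time $n$ agrees with the prescribed $\hat\nu=\mathcal N(S_{0},n\hat\sigma^{2})$. Once such a $\mathbb Q^{*}$ is produced, the verification lemma simultaneously yields optimality of $\pi^{*}$, uniqueness up to the equivalence of Remark~\ref{rem2}, and the formula \eqref{1.6}.

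The quadratic choice $f^{*}(s)=\frac{a}{2\sigma^{2}}(s-S_{0})^{2}$, together with a $\gamma^{*}_{i}$ that is affine in the lagged increments, makes $V^{\pi^{*}}_{n}$ a quadratic form in the Gaussian vector $Y=(X_{1},\dots,X_{n})^{T}$, so $d\mathbb Q^{*}/d\mathbb P$ is log-quadratic and $Y$ remains Gaussian under $\mathbb Q^{*}$. Let $\Sigma^{*}$ be its covariance and set $A:=\sigma^{2}(\Sigma^{*})^{-1}$. The three dual requirements become: (a) $A^{-1}$ is $D$-banded, because for centered Gaussian coordinates $\mathbb E_{\mathbb Q^{*}}[X_{i}\mid X_{1},\dots,X_{i-D-1}]=0$ iff the covariance $\Sigma^{*}$ itself is $D$-banded; (b) a scalar calibration $\mathbf 1^{T}\Sigma^{*}\mathbf 1=n\hat\sigma^{2}$, equivalent to a prescribed sum of the entries of $A^{-1}$ (and, after algebra, of $A$); and (c) matching the quadratic parts of $\log(d\mathbb Q^{*}/d\mathbb P)$ with $-V^{\pi^{*}}_{n}$ forces $A-I$ to decompose as $a\mathbf 1\mathbf 1^{T}$ plus a symmetric matrix whose nonzero entries $b_{|i-j|}-a$ lie outside the central $(2D+1)$-band.

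The crux is the resulting linear-algebra problem: exhibit the (unique) matrix $A$ satisfying (a)--(c) and read off both $a$ and the sequence $\{b_{i}\}$. The banded-inverse formula of \cite{BF:81} expresses the entries of any matrix whose inverse is $D$-banded in terms of a few ``generators''; substituting this expression into the decomposition forced by (c) produces a translation-invariance constraint which in turn forces $A$ (hence $A^{-1}$) to be Toeplitz. The same substitution produces the recursion \eqref{1.4} for the off-band entries, pinned by the boundary values $b_{1}=\cdots=b_{D}=a$. Condition (b) then collapses, via the explicit Toeplitz structure, to a quadratic equation in $a$ whose admissible root is \eqref{eq:a_sol}; the bound $a>-\frac{1}{D+1}$ needed for both the recursion and for the positive-definiteness of $\Sigma^{*}$ is a direct check from this closed form.

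With $\Sigma^{*}$ in hand, the value $\mathbb E_{\mathbb P}[-\exp(-V^{\pi^{*}}_{n})]$ is a Gaussian integral equal to $-\exp(C)\sqrt{\det(\sigma^{-2}\Sigma^{*})}$ for an explicit constant $C$ assembled from the linear and deterministic parts of $V^{\pi^{*}}_{n}$; the $LDL^{T}$-style factorization encoded in the recursion \eqref{1.4} makes $\det\Sigma^{*}$ telescope to the ratio $(1+Da)^{n-D-1}/(1+(D+1)a)^{n-D}$, producing \eqref{1.6}. The main obstacle is the uniqueness/Toeplitz step: showing that conditions (a)--(c) admit essentially one solution and that this solution is necessarily translation invariant. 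This is exactly where the banded-matrix results of \cite{BF:81} and the Toeplitz-inversion results of \cite{rodman1992inversion} are indispensable, and why such a clean closed form emerges only after this step.
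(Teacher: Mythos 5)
Your proposal is correct and follows essentially the same route as the paper: guess the quadratic/affine form of $\pi^*$, translate the verification conditions of Lemma~\ref{lem2.1} into the three matrix conditions (banded inverse, entry-sum calibration, constant-plus-off-band decomposition), and solve the resulting linear-algebra problem via the vanishing-minor/telescoping-determinant results of \cite{BF:81} and the Toeplitz-inversion formula of \cite{rodman1992inversion}. The only presentational difference is that you frame the Toeplitz structure as \emph{derived} from conditions (a)--(c), whereas the paper simply posits the Toeplitz $A$ built from the recursion \eqref{1.4} and verifies these properties directly (Lemmas~\ref{lem2.2}--\ref{lem2.3}), leaving uniqueness of the optimal portfolio entirely to the strict concavity argument inside Lemma~\ref{lem2.1}; both versions are valid and rely on the same lemmas.
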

\vspace{10pt}

Let us briefly collect some financial-economic observations from this result. First,
Theorem \ref{thm1.1} states that the optimal trading strategy $\gamma^{*}$
is a sum of the well known
Merton fraction $\frac{\mu}{\sigma^2}$ and a centered discrete-time Gaussian process which we compute 
by recursive relations. 
The static option $f^{*}(S_T)$ is given by a quadratic payoff function.

Observe that for the case $\sigma=\hat\sigma$ we have $a=0$
and the optimal strategy is to hold $\frac{\mu}{\sigma^2}$ stocks (the well known Merton fraction) and not invest in static options. The corresponding value is 
$u(n,D,\mu,\sigma,\sigma)=-\exp\left(-\frac{n\mu^2}{2\sigma^2}\right).$ 

Next, we focus on the interesting case which is $\sigma\neq\hat\sigma$.
First, for $D=0$ (i.e. no delay) we have 
$a=\frac{\sigma^2}{\hat\sigma^2}-1$ and the value is given by 
$$u(n,0,\mu,\sigma,\hat\sigma)=-\exp\left(-\frac{n\mu^2}{2\sigma^2}\right)\exp\left(-\frac{n g\left(\frac{\hat\sigma^2}{\sigma^2}\right)}{2}\right)$$
where 
$g(z):=z-\log z-1$, $z>0$. 
The function $g\geq 0$ is convex and attains its minimal value (equal to zero) at $z=1$. This function 
also appears in the Brownian setup 
when dealing with a scaling limit of relative entropies associated with appropriate time discretizations 
of Brownian martingale and Brownian motion, for details see \cite{backhoff2023most} and the references therein.

For general $D$ and $\sigma\neq\hat\sigma$ we will see in Section \ref{sec:3} that $a$
is the largest (real) root of the quadratic equation (\ref{equat}). 
If $\hat\sigma>\sigma$ 
and $D>0$ then the sum of the roots of the quadratic equation 
(\ref{equat}) is 
$\frac{\frac{D(D+1)\sigma^2}{n\hat\sigma^2}-2D-1}{D(D+1)}<0$ and the product of the roots is 
$\frac{1-\frac{\sigma^2}{\hat\sigma^2}}{D(D+1)}>0$. Hence, in this case both of the roots are negative, and in particular $a<0$.
Namely, the investor short sells the quadratic option $(S_n-S_0)^2$. 
For the case $\sigma>\hat\sigma$ the product of the roots is 
$\frac{1-\frac{\sigma^2}{\hat\sigma^2}}{D(D+1)}<0$, and so the large root $a$ is positive. 

Next, fix $n,D,\mu,\sigma$. In view of the above we have $a>0\Leftrightarrow\sigma>\hat\sigma$. Hence, 
the value function 
$u(n,D,\mu,\sigma,\hat\sigma)$ as a function of $\hat\sigma$ is increasing for $\hat\sigma>\sigma$ and decreasing for $\hat\sigma<\sigma$. Indeed, let $\sigma<\hat\sigma_1<\hat\sigma_2$
and let $\pi^{*}_{\hat\sigma_1}$ be the optimal portfolio for $\hat\sigma=\hat\sigma_1$. Since the corresponding static option
is $\frac{a}{2\sigma^2}(S_n-S_0)^2$ with $a<0$, then the value (at the maturity date) of the same portfolio for the case $\hat\sigma=\hat\sigma_2$ will be larger.
Similarly, if $\hat\sigma_1<\hat\sigma_2<\sigma$ and $\pi^{*}_{\hat\sigma_2}$ is the optimal portfolio for $\hat\sigma=\hat\sigma_2$ then due to the fact that the corresponding 
$a$ is positive, the value of the same portfolio for the case $\hat\sigma=\hat\sigma_1$ will be larger.
Finally, we argue that there is an asymptotic arbitrage for the case where $\hat\sigma$ is very large or very small.
To this end, instead of applying Theorem \ref{thm1.1}, we provide a straightforward argument based on static hedging.
Indeed, by 
taking $\gamma\equiv 0$ and $f(S_n)=\frac{1}{\hat\sigma}(S_n-S_0)^2$ it follows that
\begin{equation}\label{static1}
\lim_{\hat\sigma\rightarrow 0} u(n,D,\mu,\sigma,\hat\sigma)=0 
\end{equation}
and by taking 
 $\gamma\equiv 0$ and $f(S_n)=-\frac{1}{\hat\sigma}(S_n-S_0)^2$
we obtain 
\begin{equation}\label{static2}
\lim_{\hat\sigma\rightarrow \infty} u(n,D,\mu,\sigma,\hat\sigma)=0.
\end{equation}

\section{Proof of Theorem \ref{thm1.1}}\label{sec:3}
Denote by $\mathcal Q$ the set of all equivalent probability measures $\mathbb Q\sim\mathbb P$ with finite entropy 
 $\mathbb E_{\mathbb Q}\left[\log\left(\frac{d\mathbb Q}{d\mathbb P}\right)\right]<\infty$ relative to $\mathbb P$ that satisfy 
 \begin{equation}\label{2.0}
 \mathbb E_{\mathbb Q}[S_j-S_i|\mathcal G_i]=0 \ \ \forall j\geq i
 \end{equation}
 and $(S_n;\mathbb Q) \sim \hat\nu$ (the latter means that the distribution of $S_n$ under $\mathbb Q$
 is equal to $\hat\nu$). We start with the following verification result. 
\begin{lem}\label{lem2.1}
If a triplet $(\tilde\pi,\tilde{\mathbb Q},C)\in \mathcal A\times \mathcal Q\times\mathbb R$ satisfies
\begin{equation}\label{2.1}
V^{\tilde\pi}_n+\log\left(\frac{d{\tilde{\mathbb Q}}}{d\mathbb P}\right)=C
\end{equation}
 then $\tilde\pi\in\mathcal A$ is the unique (up to equivalency) optimal portfolio for the optimization 
 problem (\ref{1.2}) and the corresponding value is 
 $
\mathbb E_{\mathbb P}\left[-\exp\left(- V^{\tilde\pi}_n\right)\right]=-e^{-C}.$
\end{lem}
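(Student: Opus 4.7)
The plan is to exploit the identity $V^{\tilde\pi}_n+\log(d\tilde{\mathbb Q}/d\mathbb P)=C$ by rewriting it as the density relation $d\tilde{\mathbb Q}/d\mathbb P=e^{C-V^{\tilde\pi}_n}$, from which
$$\mathbb E_{\mathbb P}[\exp(-V^{\tilde\pi}_n)]=e^{-C}\,\mathbb E_{\mathbb P}[d\tilde{\mathbb Q}/d\mathbb P]=e^{-C}$$
is immediate, pinning down the value at the candidate $\tilde\pi$ to $-e^{-C}$.

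For optimality I plan to run the standard martingale--duality comparison, with a single preliminary observation: every admissible portfolio has vanishing expectation under every $\mathbb Q\in\mathcal Q$. Writing
$$\mathbb E_{\mathbb Q}[V^\pi_n]=\Bigl(\mathbb E_{\mathbb Q}[f(S_n)]-\int f\,d\hat\nu\Bigr)+\sum_{i=1}^n\mathbb E_{\mathbb Q}[\gamma_i X_i]=0,$$
the first bracket vanishes because $(S_n;\mathbb Q)\sim\hat\nu$, while each summand vanishes because specialising (2.0) to $t=i$, $s=i-1$ yields $\mathbb E_{\mathbb Q}[X_i\mid\mathcal G_{i-1}]=0$ and $\gamma_i$ is $\mathcal G_{i-1}$-measurable. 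With this in hand, for arbitrary $\pi\in\mathcal A$ I would change measure to $\tilde{\mathbb Q}$ via $d\mathbb P/d\tilde{\mathbb Q}=e^{V^{\tilde\pi}_n-C}$ and apply Jensen's inequality:
$$\mathbb E_{\mathbb P}[\exp(-V^\pi_n)]=e^{-C}\,\mathbb E_{\tilde{\mathbb Q}}[\exp(V^{\tilde\pi}_n-V^\pi_n)]\geq e^{-C}\exp\bigl(\mathbb E_{\tilde{\mathbb Q}}[V^{\tilde\pi}_n-V^\pi_n]\bigr)=e^{-C},$$
which is exactly $\mathbb E_{\mathbb P}[-\exp(-V^\pi_n)]\leq-e^{-C}$, with equality at $\pi=\tilde\pi$ by the first paragraph.

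Uniqueness follows from strict convexity of $\exp$: any maximiser $\pi^\star\in\mathcal A$ forces equality in Jensen, so $V^{\tilde\pi}_n-V^{\pi^\star}_n$ must be $\tilde{\mathbb Q}$-a.s. constant (hence $\mathbb P$-a.s. constant, since $\tilde{\mathbb Q}\sim\mathbb P$), and equating $\tilde{\mathbb Q}$-expectations forces that constant to be zero. Remark \ref{rem2} then upgrades the identity $V^{\pi^\star}_n=V^{\tilde\pi}_n$ to equivalence of $\pi^\star$ and $\tilde\pi$ in the sense of that remark.

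The main technical obstacle I anticipate is integrability: a priori the products $\gamma_i X_i$ and the total $V^\pi_n$ need not be $\tilde{\mathbb Q}$-integrable, and neither the cancellation $\mathbb E_{\tilde{\mathbb Q}}[V^\pi_n]=0$ nor the application of Jensen is valid without such control. The cleanest workaround is to restrict attention to $\pi\in\mathcal A$ with $\mathbb E_{\mathbb P}[\exp(-V^\pi_n)]<\infty$ (for the remaining strategies the desired inequality is trivial), use the resulting exponential bound to control $(V^\pi_n)^-$ under $\tilde{\mathbb Q}$, and then justify the conditional-expectation identities by a standard localisation/truncation on the integrand $\gamma$.
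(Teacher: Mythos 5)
Your proof is correct, and on the comparison step it takes a cleaner route than the paper's. The paper first proves the general duality inequality $\log\mathbb E_{\mathbb P}[\exp(-V^\pi_n)]\geq -\mathbb E_{\mathbb Q}[\log(d\mathbb Q/d\mathbb P)]$ valid for \emph{every} pair $(\pi,\mathbb Q)\in\mathcal A\times\mathcal Q$: it applies the Legendre--Fenchel inequality $xy\leq e^x+y(\log y-1)$ with $x=-V^\pi_n$, $y=z\,d\mathbb Q/d\mathbb P$, takes $\mathbb E_{\mathbb P}$, and then maximizes the resulting expression $z-z\log z - z\mathbb E_{\mathbb Q}[\log(d\mathbb Q/d\mathbb P)]$ over the free scaling parameter $z>0$; specializing to $\mathbb Q=\tilde{\mathbb Q}$ then gives the claim. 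You instead pass directly to the candidate measure, rewriting $\mathbb E_{\mathbb P}[\exp(-V^\pi_n)]=e^{-C}\,\mathbb E_{\tilde{\mathbb Q}}[\exp(V^{\tilde\pi}_n-V^\pi_n)]$ via \eqref{2.1} and invoking Jensen. This removes the free parameter, does not need the inequality for general $\mathbb Q$, and gives uniqueness ``for free'' from the equality case of Jensen together with Remark~\ref{rem2} (the paper instead appeals to the general principle that the optimal terminal payoff in a strictly concave utility maximization problem is unique). Both arguments hinge on the same preliminary fact — $\mathbb E_{\mathbb Q}[V^\pi_n]=0$ whenever $\mathbb E_{\mathbb P}[\exp(-V^\pi_n)]<\infty$, for every $\mathbb Q\in\mathcal Q$ — and you correctly flag that this is where the real technical work sits: the paper also restricts to finite-$\exp$ strategies, controls $(V^\pi_n)_-$ under $\mathbb Q$ via the same Legendre--Fenchel bound, and then appeals to a localisation argument (Lemma~2.1 of \cite{BD:2021}) to get $\mathbb E_{\mathbb Q}[\sum_i\gamma_i X_i]=0$ from $\mathbb E_{\mathbb Q}[(\sum_i\gamma_i X_i)_-]<\infty$ and \eqref{2.0}, which is precisely the workaround you describe. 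Your sketch of this preliminary step is slightly too quick as written — $\mathbb E_{\mathbb Q}[X_i\mid\mathcal G_{i-1}]=0$ and $\mathcal G_{i-1}$-measurability of $\gamma_i$ do not by themselves yield $\mathbb E_{\mathbb Q}[\gamma_i X_i]=0$ without the integrability control — but you explicitly acknowledge this and propose the right fix, so there is no gap.
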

\begin{proof}
The proof will be done in two steps. \\
${}$\\
 \textbf{Step I:}
In this step we prove that if $\pi\in\mathcal A$ satisfies 
$\mathbb E_{\mathbb P}\left[\exp\left(-V^{ \pi}_n\right)\right]<\infty$ then 
for any $\mathbb Q\in\mathcal Q$ 
\begin{equation}\label{2.2}
\mathbb E_{\mathbb Q}[V^{\pi}_n]=0.
\end{equation}
Indeed, let $\pi\in\mathcal A$ be as above and 
$\mathbb Q\in\mathcal Q$. In view of the relations 
$\int |f| d\hat\nu<\infty$ and $(S_n;\mathbb Q) \sim \hat\nu$ we have 
$f(S_n)\in L^1(\mathbb Q)$ and 
\begin{equation}\label{2.3}
\mathbb E_{\mathbb Q} [f(S_n)]=\int f d\hat\nu. 
\end{equation}
Next,
from the classical Legendre-Fenchel duality inequality
$xy \leq e^x + y(\log y-1)$ for $x=\left(V^{\pi}_n\right)_{-}$ (as usual we set $z_{-}:=-\min(0,z)$) 
and 
$y=\frac{d\mathbb Q}{d\mathbb P}$ we obtain 
\begin{equation}\label{2.4}
\mathbb E_{\mathbb Q}\left[\left(V^{\pi}_n\right)_{-}\right]\leq 
\mathbb E_{\mathbb P}\left[\exp\left(-V^{ \pi}_n\right)\right]+
\mathbb E_{\mathbb Q}\left[\log\left(\frac{d\mathbb Q}{d\mathbb P}\right)\right]
<\infty.
\end{equation}
From (\ref{eq:value}) and (\ref{2.3})--(\ref{2.4}) we get
$\mathbb E_{\mathbb Q}\left[\left(\sum_{i=1}^n \gamma_i X_i\right)_{-}\right]<\infty$.
Thus, using similar arguments as in the proof of Lemma 2.1 in \cite{BD:2021}
we obtain (in view of (\ref{2.0})) that 
 $\mathbb E_{\mathbb Q}\left[\sum_{i=1}^n \gamma_i X_i\right]=0$. 
This together with (\ref{2.3}) yields (\ref{2.2}) and completes the first step. 
\\
${}$\\
 \textbf{Step II:}
We proceed using similar arguments to those in Step II of the proof of Theorem $1.1$ in \cite{dolinsky2023exponential}.
From (\ref{2.1})--(\ref{2.2}) for $\pi:=\tilde\pi$ we obtain
\begin{equation}\label{2.5}
\log\left(\mathbb E_{\mathbb P}\left[\exp\left(-V^{\tilde \pi}_n\right)\right] \right)=-C=- \mathbb E_{\tilde{\mathbb Q}}\left[\log
\left(\frac{d\tilde{\mathbb Q}}{d\mathbb P}\right)\right].
\end{equation}
Next, it is well known that for a strictly concave utility maximization problem, the 
optimal portfolio value is unique (if exists) and so in view of Remark \ref{rem2} and 
(\ref{2.5}), in order to complete the proof of the Lemma it remains to show that in general we have the inequality
\begin{equation}\label{2.6}
\log\left(\mathbb E_{\mathbb P}\left[\exp\left(-V^{\pi}_n\right)\right] \right)\geq -\mathbb E_{\mathbb Q}\left[\log\left(\frac{d\mathbb Q}{d\mathbb P}\right)\right] \ \ \forall (\pi,\mathbb Q)\in\mathcal A\times\mathcal Q.
\end{equation}
Let us establish (\ref{2.6}).
We assume that $\mathbb E_{\mathbb P}\left[\exp\left(-V^{ \pi}_n\right)\right]<\infty$ (otherwise (\ref{2.6}) is trivial). Then
for any $z\geq 0$
\begin{align*}
\mathbb E_{\mathbb P}\left[\exp\left(-V^{ \pi}_n\right)\right]
&=\mathbb E_{\mathbb P}\left[\exp\left(-V^{ \pi}_n\right)+z \frac{d{\mathbb Q}}{d\mathbb P}V^{\pi}_n\right]\\
&\geq \mathbb E_{\mathbb P}\left[z\frac{d{\mathbb Q}}{d\mathbb P}\left(1-\log
\left(z\frac{d{\mathbb Q}}{d\mathbb P}\right)\right)\right]\\
&=z-z\log z-z\mathbb E_{\mathbb Q}\left[\log
\left(\frac{d{\mathbb Q}}{d\mathbb P}\right)\right].
\end{align*}
The first equality is due to \eqref{2.2}.
The inequality follows from the Legendre-Fenchel duality
inequality $xy \leq e^x+y(\log y-1)$ by setting $x=-V^{\pi}_n$ and 
$y=z\frac{d\mathbb Q}{d\mathbb P}$.
The last equality is straightforward. 
From simple calculus it follows that the concave function 
$z\rightarrow z-z\log z-z\mathbb E_{\mathbb Q}\left[\log
\left(\frac{d{\mathbb Q}}{d\mathbb P}\right)\right]$, $z>0$ attains its maximum at $z^{*}:=\exp\left(-\mathbb E_{\mathbb Q}\left[\log
\left(\frac{d{\mathbb Q}}{d\mathbb P}\right)\right]\right)$ and the corresponding maximal value 
is also $z^{*}$. This completes the proof of (\ref{2.6}).
\end{proof}
Next, we introduce some notations.
Let $\band_{\delay}$ be the set 
of all $n \times n$ positive definite matrices $Q$ that satisfy $Q_{ij}=0$ for $|i-j|>\delay$. Namely, 
$\band_{\delay}$ is the set of all banded positive definite matrices with lower bandwidth and upper bandwidth equal to $\delay$. 
Let $\toeplitz_n$ be the set of all $n \times n$ symmetric Toeplitz matrices. Recall that a Toeplitz matrix is a matrix in which each descending diagonal from left to right is constant. 
Let $C$ be a $n \times n$ real matrix and let $I,J \in \{1,..,n\}^k$ be two integer vectors of length $k \leq n$ denoting row and column indices, satisfying $I = (i_1,..,i_k), J=(j_1,..,j_k)$ 
with $1 \leq i_1 < i_2 < .. < i_k \leq n$, $1 \leq j_1 < j_2 < .. < j_k \leq n$. 
We denote by $C_J^I$ the minor specified by these indices, i.e. 
the determinant of the $k \times k$ sub-matrix determined by taking the elements $C_{ij}$ for $i=i_1,..i_k ; j=j_1,..j_k$. 
For any two natural numbers $l \geq k$ we denote the vector $(k,k+1,...,l)$ by $[k:l]$.
\begin{lem}\label{lem2.2}
Let $\{b_i\}_{i=1}^{\infty}$ be the sequence given by (\ref{1.4}). Set 
$b_0=a+1$ and define the matrix 
$A\in \toeplitz_n$ by 
$A_{ij}=b_{|i-j|}$. Then $A^{-1}\in \band_{\delay}$ and 
the determinant of $A$ is given by 
\begin{equation}\label{2.8-}
|A|=\frac{\big(1+(\delay+1) a\big)^{n-\delay}}{\big(1+ \delay a\big)^{n-\delay-1}} . 
\end{equation}
\end{lem}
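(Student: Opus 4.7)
My plan is to recognize $A$ as the covariance matrix of a stationary Gaussian AR($\delay$) process. This single construction simultaneously yields that $A$ is positive definite, that $A^{-1}$ is $\delay$-banded, and the determinant formula; everything else reduces to standard facts about Gaussian Markov chains.

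Concretely, introduce zero-mean Gaussian variables $(Y_1, \ldots, Y_n)$ as follows. Let $(Y_1, \ldots, Y_{\delay})$ have covariance $I_{\delay} + a J_{\delay}$ (where $J_{\delay}$ is the $\delay \times \delay$ all-ones matrix), and for $k > \delay$ define
$Y_k = c\bigl(Y_{k-1} + Y_{k-2} + \cdots + Y_{k-\delay}\bigr) + \eta\,\epsilon_k$,
where $c := a/(a\delay + 1)$, $\eta^2 := (1 + (\delay+1)a)/(1 + \delay a)$, and $\epsilon_k$ are iid $N(0,1)$ independent of $(Y_1, \ldots, Y_{k-1})$. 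A short Yule-Walker calculation shows that the autocovariances $\gamma_k := \mathrm{Cov}(Y_1, Y_{1+k})$ satisfy $\gamma_k = c \sum_{j=1}^{\delay} \gamma_{k-j}$ for $k \ge 1$ (this is exactly the recurrence in \eqref{1.4}), with initial values $\gamma_0 = a+1$ and $\gamma_1 = \cdots = \gamma_{\delay-1} = a$ from the prescribed covariance of $(Y_1, \ldots, Y_{\delay})$, and that the stationarity identity $\gamma_0 = c \sum_{j=1}^{\delay} \gamma_j + \eta^2$ is satisfied by the specified $\eta^2$. Hence $\gamma_k = b_k$ for all $k \ge 0$, and the covariance of the full vector $(Y_1, \ldots, Y_n)$ is exactly $A$; in particular $A$ is symmetric positive definite.

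The banded inverse is then immediate from the Markov property of a Gaussian AR($\delay$) process: for any $i$, $Y_i$ is conditionally independent of $\{Y_j : |i - j| > \delay\}$ given its $\delay$ nearest neighbors on each side, so the precision matrix $A^{-1}$ has zero entries outside the $\delay$-band. Combined with positive definiteness this gives $A^{-1} \in \band_{\delay}$. For the determinant, the chain-rule factorization
$p(Y_1, \ldots, Y_n) = p(Y_1, \ldots, Y_{\delay}) \prod_{k=\delay+1}^{n} p(Y_k \mid Y_{k-1}, \ldots, Y_{k-\delay})$,
together with $\det(I_{\delay} + a J_{\delay}) = 1 + \delay a$ and conditional variance $\eta^2$ at each step, gives
$\det A = (1 + \delay a)\,(\eta^2)^{n - \delay} = (1 + (\delay+1)a)^{n-\delay} / (1 + \delay a)^{n - \delay - 1}$,
which is \eqref{2.8-}.

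The only real technical point is verifying that $\eta^2 > 0$, equivalently that $1 + \delay a$ and $1 + (\delay+1)a$ are both strictly positive; this follows from the bound $a > -1/(\delay+1)$ asserted in Theorem~\ref{thm1.1} (noting $-1/(\delay+1) > -1/\delay$ for $\delay \ge 1$). Beyond that, the proof consists of elementary algebraic verification of the Yule-Walker identities for the specific $c$ and $\eta^2$ at hand, together with standard probabilistic facts about Gaussian Markov chains. An alternative route avoiding the probabilistic detour would be to appeal directly to the Barrett-Feinsilver criterion from \cite{BF:81}: the recurrence in \eqref{1.4} forces the top row of any consecutive $(\delay+1) \times (\delay+1)$ off-diagonal block of $A$ to equal $c$ times the sum of its other rows, so all such minors vanish, and the determinant follows by a Schur-complement recursion whose pivot stabilizes at $(1+(\delay+1)a)/(1+\delay a)$ for $k \ge \delay + 2$.
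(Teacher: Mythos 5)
Your construction of a stationary Gaussian AR($\delay$) process whose covariance matrix is exactly $A$ is a genuinely different and valid route to the lemma, and it does obtain all three conclusions --- positive definiteness, the $\delay$-banded inverse, and the determinant formula --- from a single device. The paper instead works purely with linear algebra: it shows that $A$ has vanishing sub-$(\delay+1)$-minors (the "off-diagonal block" criterion you mention as an alternative) and then invokes Theorems 5.1, 3.1, and 5.5 of \cite{BF:81} for the determinant, bandedness of $A^{-1}$, and positive definiteness respectively. Your probabilistic approach avoids the Barrett--Feinsilver machinery entirely: bandedness of $A^{-1}$ comes from the order-$\delay$ Markov property of the AR process (equivalently, the chain-rule factorization makes the log-density a sum of quadratic forms each in at most $\delay+1$ consecutive variables), and the determinant comes from the product-of-conditional-variances identity $\det A=\det\left(\mathrm{Cov}(Y_1,\ldots,Y_\delay)\right)\cdot \eta^{2(n-\delay)}=(1+\delay a)\left(\tfrac{1+(\delay+1)a}{1+\delay a}\right)^{n-\delay}$. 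Your verifications that $\gamma_k=b_k$ (matching the initial Toeplitz block $I_\delay+aJ_\delay$, the Yule--Walker consistency $\gamma_m=c\sum_j\gamma_{|m-j|}$ for $1\le m\le \delay$, and the variance identity that pins down $\eta^2$) are correct, and the point that $\gamma_\delay=a$ comes out automatically from $c(\delay a+1)=a$ is the crucial check. Each approach has its advantages: yours is self-contained modulo standard facts about Gaussian graphical models and conceptually illuminates why $A^{-1}$ is banded, while the paper's proof is purely algebraic and is the one used downstream (the vanishing-minor structure reappears in the computation of $A^{-1}$ via \cite{rodman1992inversion} in Lemma~\ref{lem2.3}).

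One genuine gap: you invoke $a>-\tfrac{1}{\delay+1}$ by citing Theorem~\ref{thm1.1}, but that inequality is a \emph{conclusion} of Theorem~\ref{thm1.1}, and the paper proves Theorem~\ref{thm1.1} using Lemma~\ref{lem2.2}, with $a>-\tfrac{1}{\delay+1}$ established as Step~I of the proof of Lemma~\ref{lem2.2}. As written, your argument is therefore circular on this point. You need $a>-\tfrac{1}{\delay+1}$ for $\eta^2>0$ and for $I_\delay+aJ_\delay$ to be positive definite, so you should prove it directly from the formula \eqref{eq:a_sol}: as in the paper's Step~I, observe that $\frac{n+z\delay(\delay+1)}{(\delay z+1)((\delay+1)z+1)}$ tends to $+\infty$ as $z\downarrow-\tfrac{1}{\delay+1}$ and to $0$ as $z\to\infty$, so the quadratic \eqref{equat} has a real root in $\left(-\tfrac{1}{\delay+1},\infty\right)$, and $a$ being the larger root exceeds it. With this added, your proof is complete.
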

\begin{proof}
For $D=0$ the statement is trivial. Assume that $D>0$. The proof will be done in three steps. 
\\
${}$\\
 \textbf{Step I:}
In this step we prove that $a$, given by (\ref{eq:a_sol}), satisfies $a>-\frac{1}{D+1}$.

Observe that (recall that $D<n$) 
 $$
\lim_{z\rightarrow\infty}\frac{n+z D (D+1)}{\left(Dz+1\right)\left((D+1)z+1\right)}=0
$$ and
$$
\lim_{z\downarrow-\frac{1}{D+1}}\frac{n+z D (D+1)}{\left(Dz+1\right)\left((D+1)z+1\right)}=\infty
$$
 and so, the equation 
 $\frac{n+z D (D+1)}{\left(Dz+1\right)\left((D+1)z+1\right)}=\frac{\hat\sigma^2}{\sigma^2}n$ 
 has at least one real root $z> -\frac{1}{D+1}$. Hence, 
 the quadratic equation
 \begin{equation}\label{equat}
 D(D+1) z^2+\left(2D+1-\frac{D(D+1)\sigma^2}{n\hat\sigma^2}\right)z+1-\frac{\sigma^2}{\hat\sigma^2}=0
 \end{equation}
 has a real root $z> -\frac{1}{D+1}$. 
 Since $a$ is the largest root of the above quadratic equation it follows that 
 $a\geq z> -\frac{1}{D+1}$.
 \\
${}$\\
 \textbf{Step II:}
 In view of the first step, (\ref{1.4}) provides a well defined sequence of real numbers. Hence, 
 $A\in \toeplitz_n$. 
 In this step we show that the matrix $A$ has vanishing sub-$(D+1)$-minors. That is, all minors of the form 
 $A_J^I$ where $I = (i_1,..,i_{D+1}), J=(j_1,..,j_{D+1})$ 
 and $i_i>j_{\delay+1}-\delay$ are equal to zero.
 
 Set $b_{-i}:=b_i=a$ for $i=1,..,\delay$. Observe that 
the sequence $\{b_i\}_{i=-D}^{\infty}$ satisfies the recursion 
\begin{equation}\label{2.8}
b_i=\frac{a }{a D+1} \sum_{j=1}^{D} b_{i-j}\ \ \mbox{for} \ \ i>0.
\end{equation}
Let $A^J_I$ be a sub-$(D+1)$-minor. From the fact that $i_1>j_{D+1}-D$ we obtain that $A_{i_k j_l}=b_{|i_k-j_l|}=b_{i_k-j_l}$. This, together with (\ref{2.8}), yield that all the rows of the matrix that corresponds to the minor $A^J_I$ are determined as linear combinations of rows $i_1,i_1+1,..,i_1+\delay-1$ of $A$, taken at the appropriate columns $j_1,..,j_{\delay+1}$. Hence, the rank of this matrix is at most $\delay$ and $A_J^I=0$ as required. 
\\
${}$\\
\textbf{Step III:}
In this step we complete the proof. 
First, 
for any $m\in\mathbb N$ and $\alpha>-1/m$ define the $m \times m$ matrix $\mathbf {1}^{m,\alpha}$ by 
$\mathbf {1}^{m,\alpha}_{ij}:=\alpha+\mathbb I_{i=j}$ (as usual $\mathbb I$ denotes the indicator function). 
Then, for any non-zero vector $x \in \mathbb{R}^m$, $x^T \mathbf {1}^{m,\alpha} x = (1+\alpha m) ||x||^2 - \alpha \sum_{i<j} (x_i - x_j)^2 > 0$, hence $\mathbf {1}^{m,\alpha}$ is a positive definite matrix. In addition, the determinant is equal to 
$
|\mathbf {1}^{m,\alpha}|=1+m\alpha.
$
 Thus, in view of the second step, 
from Theorem 5.1 in \cite{BF:81} we obtain that 
$$
|A|=\frac{\prod\limits_{i=1}^{n-\delay} A_{[i:i+\delay]}^{[i:i+\delay]}}{\prod\limits_{i=1}^{n-\delay-1} A_{[i+1:i+\delay]}^{[i+1:i+\delay]}} = \frac{\big(1+(\delay+1) a\big)^{n-\delay}}{\big(1+ \delay a\big)^{n-\delay-1}}.
$$
 Moreover, from Theorem 3.1 in \cite{BF:81}, it follows that $A^{-1}$ is $\delay$-banded. Finally, from the fact that $a>-\frac{1}{D+1}$ we get that 
that the principal minors
$A^{[k+1:k+\delay]}_{[k+1:k+\delay]}>0$ for all $k\leq n-\delay-1$ and 
$A^{[k:k+\delay]}_{[k:k+\delay]}>0$ 
for all $k\leq n-\delay$.
Hence, by applying Theorem 5.5 in \cite{BF:81} for $r=s=\delay+1$ we obtain
that 
$A^{-1}\in\band_{\delay}$. 
\end{proof}

So far we did not use the specific definition of $a$.
This is done in the next computational result. 
 \begin{lem}\label{lem2.3}
 We have $\sum_{i,j=1}^n [A^{-1}]_{ij}=n\frac{\hat\sigma^2}{\sigma^2}$.
 \end{lem}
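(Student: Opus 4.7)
The plan is to leverage the $\delay$-banded structure of $A^{-1}$ established in Lemma~\ref{lem2.2}, together with the Toeplitz structure of $A$, to compute $\mathbf{1}^T A^{-1}\mathbf{1}$ explicitly. (The case $\delay=0$ is trivial since $A$ is then diagonal with $A_{ii}=a+1=\sigma^2/\hat\sigma^2$, so I focus on $\delay>0$.) The central ingredient is a classical decomposition of a matrix with banded inverse as a signed sum of the inverses of its overlapping principal submatrices---this is the matrix-valued counterpart of the nested-determinant identity (Theorem~5.1 in \cite{BF:81}) already invoked in Lemma~\ref{lem2.2}:
\[
A^{-1} \;=\; \sum_{k=1}^{n-\delay} E_k P_k^{-1} E_k^T \;-\; \sum_{k=1}^{n-\delay-1} F_k Q_k^{-1} F_k^T,
\]
where $P_k$ is the $(\delay+1)\times(\delay+1)$ principal submatrix of $A$ on rows/columns $\{k,\dots,k+\delay\}$, $Q_k$ is the $\delay\times \delay$ principal submatrix on $\{k+1,\dots,k+\delay\}$, and $E_k,F_k$ are the corresponding canonical embeddings into $\mathbb{R}^n$.

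The Toeplitz structure of $A$ now produces a dramatic simplification: since $A_{ii}=a+1$ and $A_{ij}=a$ whenever $1\le|i-j|\le\delay$, every $P_k$ equals $\mathbf{1}^{\delay+1,a}$ and every $Q_k$ equals $\mathbf{1}^{\delay,a}$ (in the notation of the proof of Lemma~\ref{lem2.2}). Applying the Sherman--Morrison formula to $\mathbf{1}^{m,a}=I_m+a\mathbf{1}\mathbf{1}^T$ yields $\mathbf{1}^T(\mathbf{1}^{m,a})^{-1}\mathbf{1}=m/(1+am)$, so summing the above decomposition against $\mathbf{1}$ on both sides collapses everything to
\[
\sum_{i,j=1}^n [A^{-1}]_{ij} \;=\; (n-\delay)\,\frac{\delay+1}{1+(\delay+1)a} \;-\; (n-\delay-1)\,\frac{\delay}{1+\delay a}.
\]
A short common-denominator manipulation---in which the constant part of the numerator simplifies to $n$ and the coefficient of $a$ to $\delay(\delay+1)$---reduces this to $\frac{n+\delay(\delay+1)a}{(1+\delay a)(1+(\delay+1)a)}$.

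The final step is to identify this expression with $n\hat\sigma^2/\sigma^2$, and this is precisely where the specific value of $a$ in (\ref{eq:a_sol}) enters. Noting that $\delay(\delay+1)a^2+(2\delay+1)a+1 = (\delay a+1)((\delay+1)a+1)$, multiplying the defining quadratic (\ref{equat}) by $n$ gives the identity $n(1+\delay a)(1+(\delay+1)a) = (n+\delay(\delay+1)a)\,\sigma^2/\hat\sigma^2$, and dividing through by $(1+\delay a)(1+(\delay+1)a)\,\sigma^2/\hat\sigma^2$ yields exactly the claimed equality.

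The main obstacle I anticipate is the justification of the overlapping-principal-submatrix decomposition of $A^{-1}$ itself: Lemma~\ref{lem2.2} supplies only the membership $A^{-1}\in\band_\delay$, whereas the displayed identity is a further (though classical in the banded-matrix literature) consequence. It can be verified by checking that both sides are $\delay$-banded and that they agree on every $(\delay+1)\times(\delay+1)$ principal submatrix, after which uniqueness of the banded inverse forces equality. Once this decomposition is in hand, all remaining steps are routine Sherman--Morrison and algebraic manipulation driven by the defining quadratic for $a$.
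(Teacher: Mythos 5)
Your proof is correct, but it takes a genuinely different route from the paper's. The paper proceeds by first constructing the vector $v$ satisfying $\sum_j v_j b_{|i-j|}=\delta_{i0}$ and then invoking the Gohberg--Semencul-type inversion formula for Toeplitz matrices (Theorem~1 of \cite{rodman1992inversion}) to express each entry $[A^{-1}]_{ij}$ explicitly, after which the sum $\sum_{i,j}[A^{-1}]_{ij}$ is evaluated by a somewhat involved but elementary index bookkeeping. You instead use the ``junction-tree'' (clique-marginal) decomposition of a positive definite matrix with $D$-banded inverse,
\[
A^{-1}=\sum_{k=1}^{n-\delay} E_k P_k^{-1} E_k^T-\sum_{k=1}^{n-\delay-1} F_k Q_k^{-1} F_k^T,
\]
which, combined with the observation that the Toeplitz structure forces every $P_k=\mathbf 1^{\delay+1,a}$ and every $Q_k=\mathbf 1^{\delay,a}$, collapses the sum $\mathbf{1}^T A^{-1}\mathbf{1}$ to a two-term telescoping expression via Sherman--Morrison. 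Both routes arrive at the same intermediate quantity $\frac{n+\delay(\delay+1)a}{(1+\delay a)(1+(\delay+1)a)}$, and both close by appealing to the defining quadratic \eqref{equat}; indeed the paper already establishes (in Step~I of Lemma~\ref{lem2.2}) that \eqref{equat} is exactly the statement that this rational function of $a$ equals $n\hat\sigma^2/\sigma^2$. Your approach avoids the heavy summation entirely and is in that sense cleaner and more conceptual, at the cost of requiring the clique-marginal identity, which is not a direct corollary of the facts quoted from \cite{BF:81} and so would need either a citation (it is classical in Gaussian graphical models and in the theory of positive definite band completions, going back to Dempster's covariance selection) or a short proof, e.g.\ by factorizing the Gaussian density $p(x)=\prod_k p(x_k,\ldots,x_{k+\delay})/\prod_k p(x_{k+1},\ldots,x_{k+\delay})$ using the order-$\delay$ Markov property implied by $A^{-1}\in\band_\delay$ and matching quadratic forms. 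The paper's route substitutes a different, equally non-elementary external result (the Toeplitz inversion formula), so neither approach is self-contained; the trade-off is summation effort versus the type of external lemma invoked.
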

 \begin{proof}
 For $D=0$ we have $A^{-1}=\frac{\hat\sigma^2}{\sigma^2} I$ and so, the statement is trivial. Assume that $D>0$.
 The proof will be done in two steps. 
 \\
${}$\\
 \textbf{Step I:} 
 Set $v=(v_0,...,v_{n-1})$ by 
 \begin{equation}\label{2.14+}
 v_0=\frac{a D+1}{a(D+1)+1}, \ \ v_1=...=v_{D}=\frac{-a}{a(D+1)+1}, \ \ v_{D+1}=...=v_{n-1}=0.
 \end{equation}
 In this step we prove that 
 \begin{equation}\label{2.15}
\sum_{j=0}^{n-1} v_j b_{|i-j|}=\delta_{i0}, \ \ i=0,1...,n-1.
 \end{equation} 
 This is shown by straightforward derivation for the different cases of $i$. \\
For $i = 0$:
$$
\sum_{j=0}^{n-1} v_j b_{j}=\frac{(1+a)(a\delay+1)}{a (\delay+1)+1} - \frac{\delay a^2}{a (\delay+1)+1} = 1.
$$

\noindent For $1 \leq i < \delay$:
\begin{align*}
\sum_{j=0}^{n-1} v_j b_{|i-j|} &=
v_0 b_{i}+v_i b_0 +\sum_{j=1}^{i-1} v_j b_{i-j}+\sum_{j=i+1}^{\delay} v_j b_{j-i} \\
&= \frac{a (a D+1)- a(a+1)-a^2 (D-1)}{a(\delay+1)+1} = 0.
\end{align*}

\noindent For $i\geq \delay$: 
\begin{align*}
\sum_{j=0}^{n-1} v_j b_{|i-j|} &= \sum_{j=0}^{D} v_j b_{i-j}\\
&= \frac{a D+1}{a(\delay+1)+1} \left( b_{i}-\frac{a}{aD+1} \sum_{j=1}^{\delay} b_{i-j}\right)=0
\end{align*}
where the last equality follows from the recursive relation in (\ref{1.4}). This completes the first step. 
\\
${}$\\
\textbf{Step II:}
 In this step we complete the proof. 
 From Theorem 1
 in \cite{rodman1992inversion} and \eqref{2.15}
 \begin{equation}\label{2.16}
 [A^{-1}]_{ij}=\frac{1}{v_0}\left(\sum_{k=1}^{i\wedge j} v_{i-k}v_{j-k}-\sum_{k=1}^{i\wedge j-1 }v_{n-i+k}v_{n-j+k}\right).
 \end{equation}
 Thus,
 \begin{align*} 
 \sum_{i,j=1}^{n} [A^{-1}]_{ij}
 &=\sum_{i=1}^n [A^{-1}]_{ii}+
 2\sum_{k=1}^ D\sum_{j=1}^{n-k} [A^{-1}]_{j+k j}\\
 &=n v_0+\frac{v^2_1}{v_0}\left(\sum_{i=1}^n \left((i-1)\wedge D-0\vee (D+i-n)\right)\right)\\
 &+2 v_1\sum_{k=1}^D (n-k)\\
 &+2 \frac{v^2_1}{v_0}\sum_{k=1}^D \sum_{j=1}^{n-k} \left((j-1)\wedge (D-k)-0\vee (D+j-n)\right)\\
 &= n v_0+(D-2n-D-1)v_1+D^2 (n-D-1)\frac{v^2_1}{v_0}\\
 &=\frac{(a D+1)n}{aD+1+a}-\frac{a D (2n-D-1)}{aD+1+a}+\frac{a^2 D^2 (n-D-1)}{(aD+1)(aD+1+a)}\\
 &=n\sigma^2. 
 \end{align*}
 The first equality is due to the fact that $A^{-1}\in\band_D$. The second equality follows from 
 (\ref{2.16}) and the fact that $v_1=...=v_D$ and $v_{D+1}=...=v_{n-1}=0$. The third equality is simple summation. 
The fourth equality is a substitution of $v_0,v_1$ from (\ref{2.14+}).
The last equality is due to the fact that $a>-\frac{1}{D+1}$ is a solution to 
(\ref{equat}).
This completes the proof. 
\end{proof} 

We now have all the pieces in place that we need for the completion of the proof of Theorem \ref{thm1.1}.
\begin{proof}
From Lemma \ref{lem2.2} it follows that the portfolio $\pi^{*}=(\gamma^{*},f^{*})$ which given by (\ref{1.5}) is well defined 
and, since $b_{i-j}-a=0$ for $i-D-1<j<i$ we have $\pi^{*}\in\mathcal A$.

Set
\be\label{2.8+} 
\hat C := \frac{1}{2\sigma^2}n\left(\mu^2-a{\hat\sigma^2}\right) + \frac{1}{2} \log \left( \frac{\left(1+\left(D+1\right)a\right)^{n-\delay}}{\left(1+Da\right)^{n-\delay-1}} \right) \\ 
\ee 
and define the measure 
$\hat{\mathbb Q}$ by 
\begin{equation}\label{2.9}
\frac{d\hat{\mathbb Q}}{d\mathbb P}:=\exp\left(\hat C-V^{\pi^{*}}_n\right).
\end{equation}
Clearly, the triplet 
$(\pi^{*},\hat{\mathbb Q},\hat C)$ satisfies (\ref{2.1}). 
Thus, in view of Lemma \ref{lem2.1} in order to complete the proof of Theorem \ref{thm1.1}
we need to show that 
$\hat{\mathbb Q}\in\mathcal Q$.

Let $X=(S_1-S_0,S_2-S_1,...,S_n-S_{n-1})$ and $\boldsymbol{\mu_n} =(\mu,...,\mu)\in\mathbb R^n$.
From (\ref{eq:value}), (\ref{1.5}) and the simple equality $(S_n-S_0)^2=\sum_{i=1}^n X^2_i+2\sum_{1\leq i<j\leq n} X_iX_j$
we have
 $$V^{\pi^{*} }_n=\frac{1}{2\sigma^2}\left(X (A-I) X{'}+2 \boldsymbol{\mu_n} X'-n a \hat\sigma^2\right).$$
 This together with (\ref{2.8-}) and (\ref{2.8+})--(\ref{2.9}) yield 
 $$\frac{d\hat{\mathbb Q}}{d\mathbb P}:=\frac{\frac{\exp\left(-\frac{1}{2}X \frac{A}{\sigma^2} X'\right)}{\sqrt {(2\pi)^n | A^{-1}|}}}{\frac{\exp\left(-\frac{1}{2}(X-{\boldsymbol{\mu_n}})\frac{I}{\sigma^2}(X-{\boldsymbol{\mu_n}})'\right)}{\sqrt {(2\pi)^n}}}.$$
Hence, the relation $\left(X;{\mathbb P}\right) \sim\mathcal N(\boldsymbol{\mu_n},\sigma^2 I)$
implies that $\hat{\mathbb Q}$ is a probability measure 
and $\left(X;\hat{\mathbb Q}\right) \sim\mathcal N(0,\sigma^2 A^{-1})$.
In particular $\mathbb E_{\hat{\mathbb Q}}\left[ \log\left(\frac{d\hat{\mathbb Q}}{d\mathbb P}\right)\right]<\infty$. 
From Lemma \ref{lem2.2} we obtain that $A^{-1}\in\mathcal S_D$ and so
(under the probability measure $\hat{\mathbb Q}$),
 $X_k$ is independent of $(X_1,...,X_{(k-1-\delay)^{+}})$ for all $k$. This yields
 (\ref{2.0}). 
 Finally, the property $(S_n;\hat{\mathbb Q}) \sim \hat\nu$
 follows from Lemma \ref{lem2.3}.
 \end{proof}

\section{The Bachelier Model}\label{sec:4}
In this section we assume that $(\Omega, \mathcal{F}, \mathbb P)$ is
 a complete
probability space carrying a one-dimensional Brownian motion 
$W=(W_t)_{t \in [0,1]}$ with time horizon $T=1$. 
We consider a simple financial market with a riskless savings account bearing
zero interest (for simplicity) and with a risky asset $P$ with Bachelier price
dynamics
${P}_t={P}_0+\theta t+\varsigma W_t, \ \ t\in [0,1]$
 where $P_0\in\mathbb R$ is the initial asset price, $\theta\in\mathbb R$ is the constant
drift and $\varsigma>0$ is the constant volatility. 
We focus on an investor who is informed about the risky asset’s price changes
with a delay $H>0$. 

\subsection{Scaling Limit}\label{section4.1}
Fix $n\in\mathbb N$ and consider an investor who can trade and observe the risky asset only at times from the grid
$\left\{0,\frac{1}{n},\frac{2}{n},...,1\right\}$. 
The investor's flow of information is given by the filtration 
$\mathcal G^n_k:=\sigma\left\{P_0,P_{\frac{1}{n}},...,P_{\frac{(k-D_n)\vee 0}{n}}\right\}$, $k=0,1,...,n$ 
where $D_n:=\min\{m\in\mathbb N: m /n\geq H\}=\lceil H n\rceil$.
In addition, the investor can take a static position in $P_1$, i.e. for any continuous function $f:\mathbb R\rightarrow\mathbb R$
 we can buy a vanilla option with the random payoff $f(P_1)$ for the amount $\int f d\hat{\mathcal{V}}$. We assume 
 that $\hat{\mathcal{V}}= \mathcal N(P_0, \hat\varsigma^2)$ 
for some constant $\hat\varsigma>0$. 
As in Section \ref{sec:2}, in order to avoid arbitrage the mean of the distribution 
$\hat{\mathcal{V}}$ should be equal to the initial stock price $P_0$. 

A trading strategy is a pair $\pi=(\gamma,f)$ where $\gamma=(\gamma_t)_{t \in [0,1]}$ is a piecewise constant process 
that satisfies 
\begin{equation}\label{4.0-}
\gamma_t:=\begin{cases}
			\gamma_{\frac{k}{n}}, & \text{if \ $\frac{k}{n} \leq t <\frac{k+1}{n}$, \ \ $k=0,1,...,n-2$}\\
 \gamma_{\frac{n-1}{n}}, & \text{if \ $\frac{n-1}{n}\leq t\leq 1$}
		 \end{cases}
\end{equation}
and $f:\mathbb R\rightarrow\mathbb R$ is a continuous function such that 
$\int f d\hat{\mathcal{V}}<\infty$.
Moreover, $\gamma_{\frac{k}{n}}$ is $\mathcal G^n_k$ measurable for all $k=0,1,...,n-1$.
Denote by $\mathcal A^n$ the set of all trading strategies.
For $\pi=(\gamma,f)\in\mathcal A^n$ the corresponding portfolio value
at the maturity date is given by 
$$
V^{\pi}_1=f(P_1)+\int_{0}^1 \gamma_t dP_t -\int fd\hat{\mathcal{V}}.
$$
Observe that for $\gamma$ of the form (\ref{4.0-}) we have 
$\int_{0}^1 \gamma_t dP_t=\sum_{i=1}^n \gamma_{\frac{i-1}{n}} \left(P_{\frac{i}{n}}-P_{\frac{i-1}{n}}\right).$

Consider the exponential utility maximization problem
\begin{equation}\label{4.1}
U_n(H,\theta,\varsigma,\hat\varsigma):=\sup_{\pi\in\mathcal A^n}\mathbb E_{\mathbb P}\left[-\exp\left(- V^{\pi}_1\right)\right].
\end{equation}

Before we formulate the limit theorem for the above problem, we will need some preparations. 
 Set 
\begin{equation}\label{eq:alpha_def}
 \alpha:=\alpha(H,\varsigma,\hat\varsigma)=
 \frac{1}{H}\left(1-\frac{2}{H\frac{\varsigma^2}{\hat\varsigma^2}+\sqrt{4(1-H)\frac{\varsigma^2}{\hat\varsigma^2}+H^2\frac{\varsigma^4}{\hat\varsigma^4}}}\right).
 \end{equation}
 Observe that $\alpha$ is determined by $H$ and the ratio $\frac{\varsigma}{\hat\varsigma}$.

Define the function $\kappa:[0,1]\rightarrow\mathbb R$ by 
\begin{equation}
\kappa_t = \frac{\alpha}{1-\alpha H} + \sum_{k=1}^{\lceil \frac{1}{H} \rceil - 1} {\mathbb I}_{t \in [kH, (k+1)H)} \exp\big(\alpha\left(t- k H\right)\big) \sum_{j=0}^{k-1} c_{k-j} \frac{(-\alpha)^j}{j!} (t-kH)^{j}
\label{eq:b_sol_recursive}
\end{equation} 
where $c_{1}=-\alpha$ and the remaining constants $c_{k+1}$ for $k=1,..,\lceil \frac{1}{H}-1 \rceil $ are given by the recurrence relation
\be
c_{k+1} =\exp\left(\alpha H\right)\sum_{j=0}^{k-1}c_{k-j}\frac{(-\alpha H)^j}{j!} .
\label{eq:c_coefficients_recursive}
\ee
From (\ref{eq:alpha_def}) we have $\alpha<\frac{1}{H}$ and so $\kappa$ is well defined.

We next explain the intuition behind the definition of $\kappa$. The fact that $c_1=-\alpha$ gives 
$\kappa_H=\frac{\alpha^2 H}{1-\alpha H}=\int_{0}^H \kappa_s ds$. 
The recursive definition (\ref{eq:c_coefficients_recursive}) provides that 
$\kappa$ is continuous in the points $2H,3H,...$ and so $\kappa$ is continuous on the interval $[H,1]$. 
From 
(\ref{eq:b_sol_recursive}) and the product rule (for differentiation) we obtain that for any $t\in (kH,(k+1)H)$ 
\begin{align*}
\frac{d}{dt}\kappa_t-\alpha \kappa_t &=-\frac{\alpha^2}{1-\alpha H}
+\exp\big(\alpha\left(t- k H\right)\big) \sum_{j=1}^{k-1} j c_{k-j} \frac{(-\alpha)^j}{j!} (t-kH)^{j-1}\\
&=-\frac{\alpha^2}{1-\alpha H}-\alpha
\exp\big(\alpha\left(t- k H\right)\big) \sum_{j=0}^{k-2} c_{k-1-j} \frac{(-\alpha)^j}{j!} (t-kH)^{j}\\
&=-\alpha\kappa_{t-H}.
\end{align*}
We conclude that $\kappa$ satisfies the following (integral) equation
\begin{equation}\label{prop}
\kappa_t=\alpha \int_{t-H}^t \kappa_s ds, \ \ \forall t\geq H. 
\end{equation} 
In fact our definition of $\kappa$ 
by (\ref{eq:b_sol_recursive})-(\ref{eq:c_coefficients_recursive})
was done after solving explicitly the linear equation (\ref{prop}) with the initial condition 
$\kappa_t=\frac{\alpha}{1-\alpha H}$ for $t<H$. The solution is obtained recursively. For any $k\in\mathbb N$,
$\kappa_{(kH,(k+1)H}$ solves the simple ordinary differential equation 
$\frac{d}{dt}\kappa_t-\alpha\kappa_t=-\alpha\kappa_{t-H}$ and $\kappa_{t-H}$ is known from previous step. We omit the technical details for this part.
The only details needed are
the definition of $\kappa$ 
by (\ref{eq:b_sol_recursive})-(\ref{eq:c_coefficients_recursive})
and the property (\ref{prop}).

We arrive at the limit theorem.
\begin{thm}\label{thm4.1}
As $n\rightarrow\infty$ we 
have 
\begin{align}\label{4.0}
&\lim_{n\rightarrow\infty} U_n(H,\theta,\varsigma,\hat\varsigma)\nonumber\\
&=
-{\exp\left(\frac{1}{2}\left(-\frac{\theta^2}{\varsigma^2}+
\alpha\left(\frac{\hat\varsigma^2}
{\varsigma^2(1-\alpha H)}+H-1\right)\right)\right)}\sqrt{1-\alpha H}.
\end{align}
 Moreover, for any $n\in\mathbb N$ let 
 $\pi^{*,n}=(\gamma^{*,n},f^{*,n},)\in\mathcal A^n$ be the optimal portfolio (up to equivalence) for the optimization problem (\ref{4.1}). Then,
 \begin{equation}\label{4.2}
 \lim_{n\rightarrow\infty} f^{*,n}(P_1)=\frac{\alpha}{2\varsigma^2(1-\alpha H)}\left(P_1-P_0\right)^2, 
 \end{equation}
and we have the uniform convergence 
 \begin{equation}\label{4.3}
 \lim_{n\rightarrow\infty}\sup_{0\leq t\leq T}|\gamma^{*,n}_t-\Gamma_t|=0 \ \ \mbox{a.s.}
 \end{equation}
 where $\Gamma$ is the Volterra Gaussian process given by
\begin{equation}\label{vol}
\Gamma_t:=\frac{\theta}{\varsigma^2}+\frac{1}{2\varsigma^2}\int_{0}^t \left(\kappa_{t-s}-\frac{\alpha}{1-\alpha H}\right)dP_s.
\end{equation}
\end{thm}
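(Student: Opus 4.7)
The plan is to derive all three limits from Theorem~\ref{thm1.1} applied to the discretized problem with parameters $(\mu,\sigma^2,\hat\sigma^2,D)=(\theta/n,\varsigma^2/n,\hat\varsigma^2/n,D_n)$, under which the grid-restricted Bachelier increments reproduce the i.i.d.\ Gaussian setup of Section~\ref{sec:2}. Let $a_n$ and $\{b_i^{(n)}\}$ denote the constant and sequence supplied by (\ref{eq:a_sol}) and (\ref{1.4}) for these parameters. The first step is the algebraic limit $n a_n \to \alpha/(1-\alpha H)$: dividing the quadratic (\ref{equat}) by $n^2$ and using $D_n/n\to H$ with the ansatz $a_n\sim\beta/n$ yields a limiting quadratic in $\beta$ whose relevant (largest) root equals $\alpha/(1-\alpha H)$; that this is equivalent to the formula for $\alpha$ preceding (\ref{eq:b_sol_recursive}) is an elementary manipulation. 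Since $f^{*,n}(P_1)=\frac{n a_n}{2\varsigma^2}(P_1-P_0)^2$, the static-option convergence (\ref{4.2}) follows immediately.

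For the value limit (\ref{4.0}), substitute the scaled parameters into (\ref{1.6}). The exponential prefactor $\exp\big(\frac{n a_n\hat\varsigma^2-\theta^2}{2\varsigma^2}\big)$ tends to $\exp\big(\frac{\alpha\hat\varsigma^2}{2\varsigma^2(1-\alpha H)}-\frac{\theta^2}{2\varsigma^2}\big)$. The determinant factor requires more care: rewrite its logarithm as $-\frac{1}{2}(n-D_n)\log\!\big(1+\frac{a_n}{1+D_n a_n}\big)-\frac{1}{2}\log(1+D_n a_n)$, then expand using $\frac{a_n}{1+D_n a_n}=\alpha/n+o(1/n)$, $(n-D_n)/n\to 1-H$, and $\log(1+D_n a_n)\to -\log(1-\alpha H)$ to obtain the limit $-\frac{1}{2}[(1-H)\alpha-\log(1-\alpha H)]$. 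Assembling the two factors yields (\ref{4.0}).

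For the hedge (\ref{4.3}), first promote the pointwise limit of the scaled kernel $B_t^{(n)}:=n b_{\lfloor tn\rfloor}^{(n)}$ to uniform convergence on $[0,1]$ toward $\kappa_t$. On $[0,H]$ this is simply $B_t^{(n)}=n a_n\to\alpha/(1-\alpha H)=\kappa_t$. For $t>H$ the recursion (\ref{1.4}) reads $B_{i/n}^{(n)}=\frac{n a_n}{1+D_n a_n}\cdot\frac{1}{n}\sum_{k=1}^{D_n}B_{(i-k)/n}^{(n)}$; the prefactor tends to $\alpha$ and the sum is a Riemann approximation of $\int_{t-H}^{t}B_s^{(n)}\,ds$. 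A short induction over the intervals $[kH,(k+1)H]$, together with a uniform bound on $B^{(n)}$ obtained from the explicit form (\ref{eq:b_sol_recursive}), propagates the uniform convergence, and the limit is identified as $\kappa$ through the integral equation (\ref{prop}).

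With uniform kernel convergence in hand, the strategy from (\ref{1.5}) rewrites as $\gamma_t^{*,n}=\frac{\theta}{\varsigma^2}+\frac{1}{\varsigma^2}\sum_{j=1}^{\lfloor tn\rfloor-1}\big(B_{(\lfloor tn\rfloor-j)/n}^{(n)}-n a_n\big)X_j$, and splitting $X_j=\theta/n+\varsigma(W_{j/n}-W_{(j-1)/n})$ decomposes this into a Riemann sum against $ds$, which converges uniformly in $t$ by continuity of $\kappa$, plus a discrete Brownian stochastic integral. The main technical obstacle is upgrading the $L^2$ convergence of the Brownian part to a.s.\ uniform convergence in $t$. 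The plan is to integrate by parts in $s$---using that $\kappa$ is piecewise $C^1$ by (\ref{eq:b_sol_recursive})---so as to replace the stochastic integral by Lebesgue integrals of $W_s$ against $\partial_s\kappa_{t-s}$ plus boundary terms evaluated on the continuous Brownian path; uniform convergence in $t$ then reduces to uniform continuity of Brownian paths together with standard uniform convergence of Riemann sums of continuous integrands.
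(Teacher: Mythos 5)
Your overall strategy for \eqref{4.0} and \eqref{4.2} — apply Theorem~\ref{thm1.1} to $S_k=P_{k/n}$ with parameters $\mu=\theta/n$, $\sigma^2=\varsigma^2/n$, $\hat\sigma^2=\hat\varsigma^2/n$, $D=D_n$, extract the expansion $n a_n\to\alpha/(1-\alpha H)$, and take limits in \eqref{1.6} and \eqref{1.5} — is the same as the paper's, and your computation of the determinant factor via the logarithm is a sound variant of what the paper does directly.

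However, your argument for \eqref{4.3} has a genuine gap: the claimed uniform convergence of the scaled kernel $B^{(n)}_t=n b^{(n)}_{\lfloor tn\rfloor+1}$ to $\kappa_t$ on $[0,1]$ is false in general. The function $\kappa$ has a jump of size $-\alpha$ at $t=H$ (from $\kappa_{H^-}=\frac{\alpha}{1-\alpha H}$ to $\kappa_H=\frac{\alpha^2 H}{1-\alpha H}$), whereas the piecewise-constant kernel $B^{(n)}$ has its jump at $t=D_n/n=\lceil Hn\rceil/n$, which generically differs from $H$. On the interval $(H,D_n/n)$ of length up to $1/n$, $B^{(n)}_t$ is still $n a_n\approx\frac{\alpha}{1-\alpha H}$ while $\kappa_t\approx\frac{\alpha^2 H}{1-\alpha H}$, so $\sup_t|B^{(n)}_t-\kappa_t|\geq|\alpha|-o(1)$. (This is precisely why the paper's Lemma~\ref{lem5.2} establishes only an $L^2$ rate $O(1/n)$ rather than an $L^\infty$ rate, and explicitly flags the discontinuity at $H$ as the obstruction.) Consequently, the "short induction propagating uniform convergence" cannot go through as stated, and the subsequent integration-by-parts step — whose boundary terms sit exactly at the mismatched jump location — inherits the same order-one discrepancy on a vanishing interval of $s$-values. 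The paper sidesteps this entirely: it only needs the $L^2$ estimate, feeds it into the Gaussian nature of $\gamma^{*,n}_{k/n}-\Gamma_{k/n}$ to get a $O(1/n^3)$ bound on the sixth moment at each grid point, sums over $n$ grid points to obtain $\mathbb{E}[\Phi_n^6]=O(1/n^2)$, and concludes a.s.\ uniform convergence by Markov and Borel--Cantelli (together with a.s.\ continuity of the Volterra process $\Gamma$). If you wish to keep your pathwise route you would need to excise a shrinking neighborhood of $H$, prove uniform convergence off that neighborhood, and then control the stochastic-integral contribution from the bad set — but that last step essentially forces you back to a moment/Borel--Cantelli argument of the paper's type.
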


The proof of Theorem \ref{thm4.1} is given in Section \ref{sec:5}. 
Specifically, in order to prove the uniform convergence of the piecewise constant process ${\gamma}^{*,n}$ to the continuous Voltera Gaussian Process $\Gamma$, recall that for the discrete 
case ${\gamma}^{*}_i$ from \eqref{1.5} 
is determined by the coefficients $b_i$ defined in \eqref{eq:a_sol}--\eqref{1.4}, 
whereas $\Gamma$ in \eqref{vol} is determined by the continuous 
function $\kappa$ defined in \eqref{eq:b_sol_recursive}--\eqref{eq:c_coefficients_recursive}. 
In Lemma \ref{lem5.2} we show that the coefficients $b_i$ converge when 
properly scaled by $n$ to the continuous weight function $\kappa$ as $n \to \infty$. This convergence is illustrated numerically in Figure \ref{fig:strategy_limit_prop} (bottom) in Section \ref{sec4.2}. See details about the computations of the optimal strategy and value in Appendix \ref{sec:appendix}.

\begin{figure}[!ht]
\centering
\includegraphics[width=0.8\columnwidth]{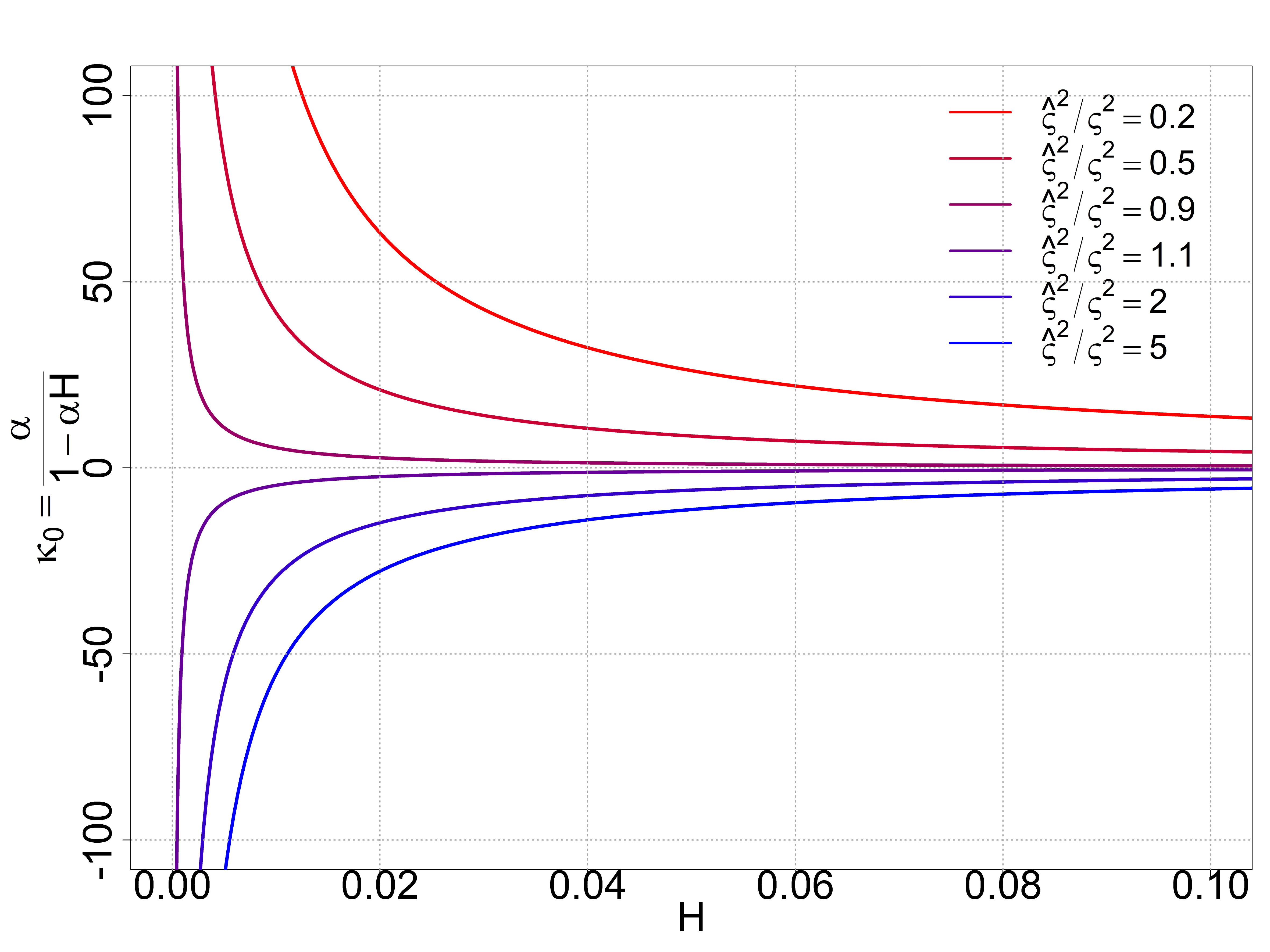}
\caption{\footnotesize 
The value of $\kappa_0 = \frac{\alpha}{1-\alpha H}$ for different values of the ratio $\frac{\hat \varsigma^2}{\varsigma^2}$. 
The initial investment weight in the static option $\kappa_0$ is positive (negative) for $\frac{\hat \varsigma^2}{\varsigma^2} < 1$ ($>1$). When the delay is small ($H \downarrow 0$), by \eqref{eq:alpha_def} the investment weight diverges, $\kappa_0 \sim \frac{1}{H} (1-\frac{\hat \varsigma^2}{\varsigma^2} ) \to \pm \infty$ (provided that $\hat \varsigma\neq\varsigma$).
 The value of $|\kappa_0|$ decreases monotonically with $H$.
\label{fig:kappa_0_vs_H}}	
\end{figure}

\subsection{Continuous-Time Hedging}\label{sec4.2}
In this section we consider the setup where trading the risky asset is done continuously.
Let $\{\mathcal G_t\}_{t=0}^1$ be the augmented filtration which is generated 
by the process $\{P_{(t-H)\vee 0}\}_{t=0}^1$. 
A trading strategy is a 
pair $\pi=(\gamma,f)$ where $\gamma=\left(\gamma_t\right)_{t \in [0,1]}$ is a 
predictable process with respect 
to $(\mathcal G_t)_{t\in [0,1]}$ which satisfies $\int_{0}^1 \gamma^2_t dt<\infty$ a.s. 
and $f:\mathbb R\rightarrow\mathbb R $ is a continuous function such that 
$\int f d\hat{\mathcal V}<\infty$ (the probability measure $\hat{\mathcal V}$
is the same as in Section \ref{section4.1}).
For $\pi=(\gamma,f)$ the corresponding portfolio value
at the maturity date is given by 
\begin{equation}\label{9.1}
V^{\pi}_1=f(P_1)+\int_{0}^1 \gamma_t dP_t-\int f d\hat{\mathcal V}
\end{equation}
where the integral $\int_{0}^1 \gamma_t dP_t$ is the It\^{o} integral.
Let us emphasize that we do not require any notion of admissibility for
the portfolio value (we do assume that the constant delay $H$ is strictly larger than zero).
Denote by $\mathfrak A$ the set of all trading strategies. 

\begin{rem}\label{rem20}
Similarly to Section \ref{sec:2} 
we say that two trading strategies $\pi=(\gamma,f)$ and $\tilde\pi=(\tilde\gamma,\tilde f)$ are equivalent if there exists a constant
$c\in\mathbb R$ such that $\gamma-\tilde\gamma=c$ $dt\otimes\mathbb P$ a.s and 
$\tilde f-f$ is a linear function with slope equal to $c$. 
Clearly if $\pi=(\gamma,f)$ and $\tilde\pi=(\tilde\gamma,\tilde f)$ are equivalent
then $V^{\pi}_1=V^{\tilde\pi}_1$ a.s.
Let us argue that the opposite direction also holds. Indeed, let 
$\pi=(\gamma,f)$ and $\tilde\pi=(\tilde\gamma,\tilde f)$ be two trading strategies 
such that $V^{\pi}_1=V^{\tilde\pi}_1$.
Set $\beta=\gamma-\tilde\gamma$ and $h=\tilde f-f$. 
Let $n\in\mathbb N$ such that $\frac{1}{n}<H$.
Introduce the process 
$B_t:=W_{t+1 -\frac{1}{n}}-W_{1-\frac{1}{n}}$, $t\in \left[0,\frac{1}{n}\right]$. From $V^{\pi}_1=V^{\tilde\pi}_1$ we get
\begin{equation}\label{100}
h\left(W_{1-\frac{1}{n}}+B_{\frac{1}{n}}\right)=\int_{0}^{1-\frac{1}{n}}\beta_t dW_t+\int_{0}^{\frac{1}{n}}\beta_{s+1-\frac{1}{n}}dB_s.
\end{equation}
Observe that
$B$ is a Brownian motion which is independent of $\left(W_t\right)_{t\in\left[0,1-\frac{1}{n}\right]}$ and
$(\beta_t)_{t\in [0,1]}$ (here we use that $\frac{1}{n}<H$). Hence, 
(\ref{100}) implies that there exists a constant $c\in\mathbb R$ such that $h$ is a linear
function with slope equal to $c$ and the restriction of the process $\beta$ to the interval
$\left[1-\frac{1}{n},1\right]$ is equal $c$ ($dt\otimes \mathbb P$ a.s.).
By repeating this argument for the intervals $\left[\frac{k-1}{n},\frac{k}{n}\right]$, $k=n-1,n-2,...,1$
 we conclude that $\beta=c$ $dt\otimes\mathbb P$ as required. 
\end{rem}

The following theorem provides the solution to the continuous-time hedging problem.
\begin{thm}\label{thm.100}
Let $\pi_{*}:=(\Gamma,F)$
where 
$\Gamma$ is given by 
(\ref{vol}) and $F(P_1)$ is equal to the 
right-hand side of (\ref{4.2}). Then, $\pi_{*}$ is the unique (up to equivalency) solution for the optimization problem:
$\mbox{Maximize} \ \mathbb E_{\mathbb P}\left[-\exp\left(- V^{\pi}_1\right)\right] \ \mbox{over} \ \pi\in\mathfrak A.
$
Moreover,
\begin{align*}
&\sup_{\pi\in\mathfrak A}\mathbb E_{\mathbb P}\left[-\exp\left(-V^{\pi}_1\right)\right]=
\mathbb E_{\mathbb P}\left[-\exp\left(-V^{\pi_{*}}_1\right)\right]\\
&=-{\exp\left(\frac{1}{2}\left(-\frac{\theta^2}{\varsigma^2}+
\alpha\left(\frac{\hat\varsigma^2}
{\varsigma^2(1-\alpha H)}+H-1\right)\right)\right)}\sqrt{1-\alpha H}.
\end{align*}
\end{thm}
\begin{proof}
Since the utility function is strictly concave then
the optimal portfolio value is unique (if exists). This together with Remark \ref{rem20} gives uniqueness. 

Next, recall the optimal portfolios $\pi^{*,n}$, $n\in\mathbb N$ from Theorem \ref{thm4.1}.
From (\ref{4.3}) it follows that $V^{\pi^{*,n}}\rightarrow V^{\pi_{*}}$ in probability (as $n\rightarrow\infty$). Thus, by 
applying the 
Fatou Lemma for the function $x\rightarrow e^{-x}$, we obtain from (\ref{4.0}).
$$\mathbb E_{\mathbb P}\left[-\exp\left(-V^{\pi}_1\right)\right]\geq -{\exp\left(\frac{1}{2}\left(-\frac{\theta^2}{\varsigma^2}+
\alpha\left(\frac{\hat\varsigma^2}
{\varsigma^2(1-\alpha H)}+H-1\right)\right)\right)}\sqrt{1-\alpha H}.$$

We conclude that in order to complete the proof of Theorem \ref{thm.100} it remains to establish the upper bound
\begin{equation}\label{9.2}
\sup_{\pi\in\mathfrak A}\mathbb E_{\mathbb P}\left[-\exp\left(-V^{\pi}_1\right)\right]
\leq -{\exp\left(\frac{1}{2}\left(-\frac{\theta^2}{\varsigma^2}+
\alpha\left(\frac{\hat\varsigma^2}
{\varsigma^2(1-\alpha H)}+H-1\right)\right)\right)}\sqrt{1-\alpha H}.
\end{equation}
This is done in Section \ref{sec:6}.  
\end{proof}

\begin{figure}[!ht]
\includegraphics[width=0.495\columnwidth]{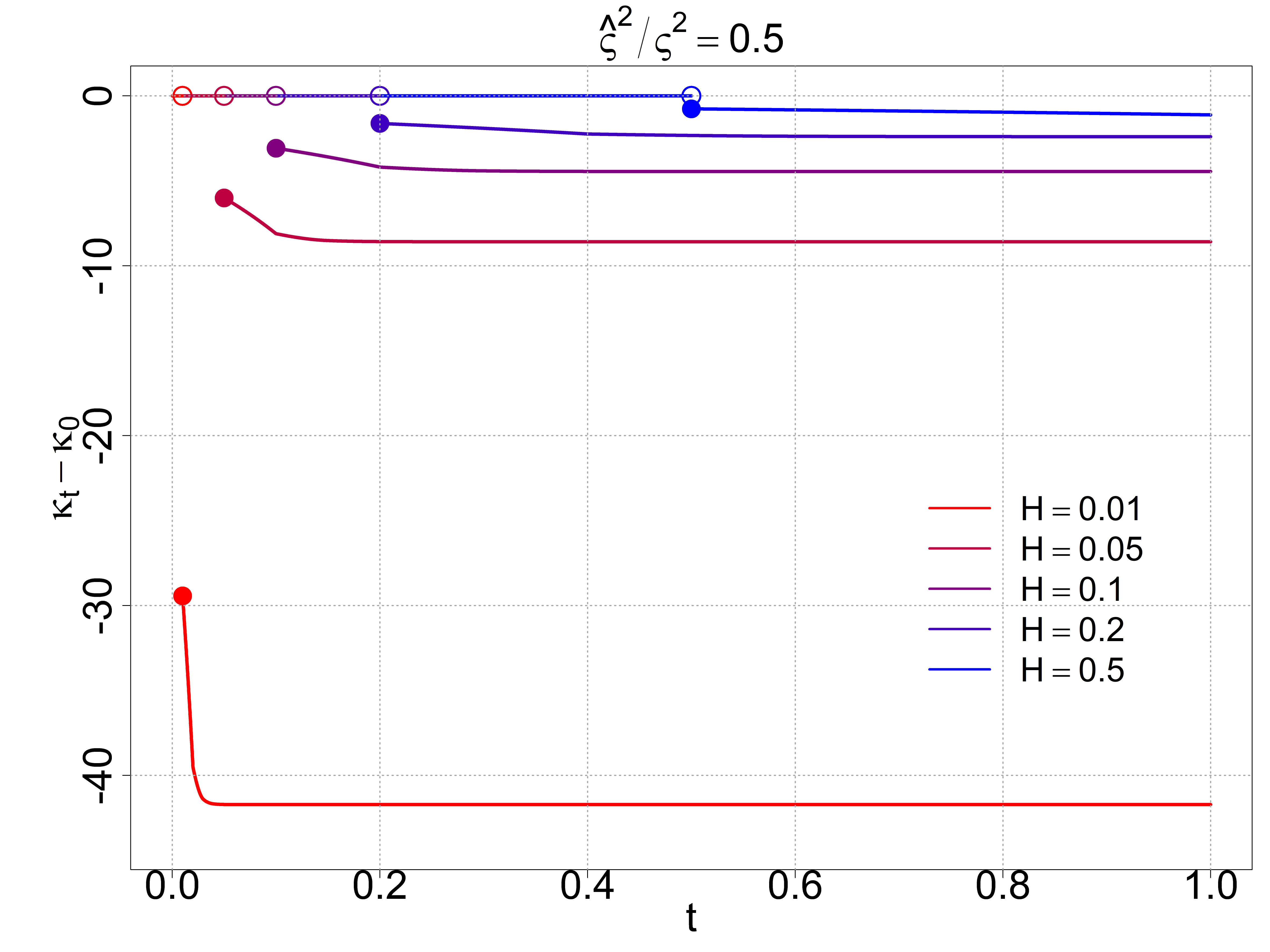}
\includegraphics[width=0.495\columnwidth]{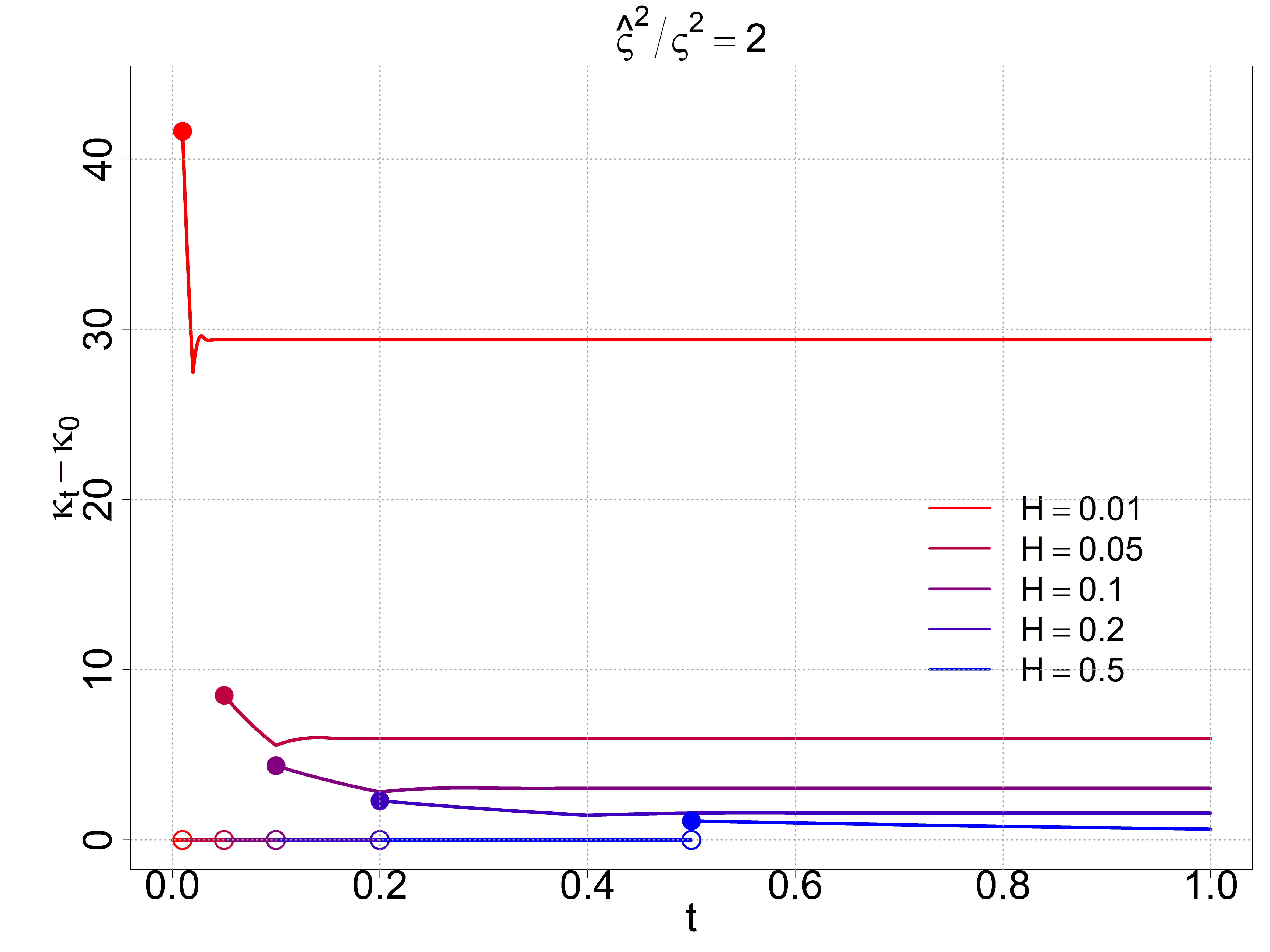}
\includegraphics[width=0.495\columnwidth]{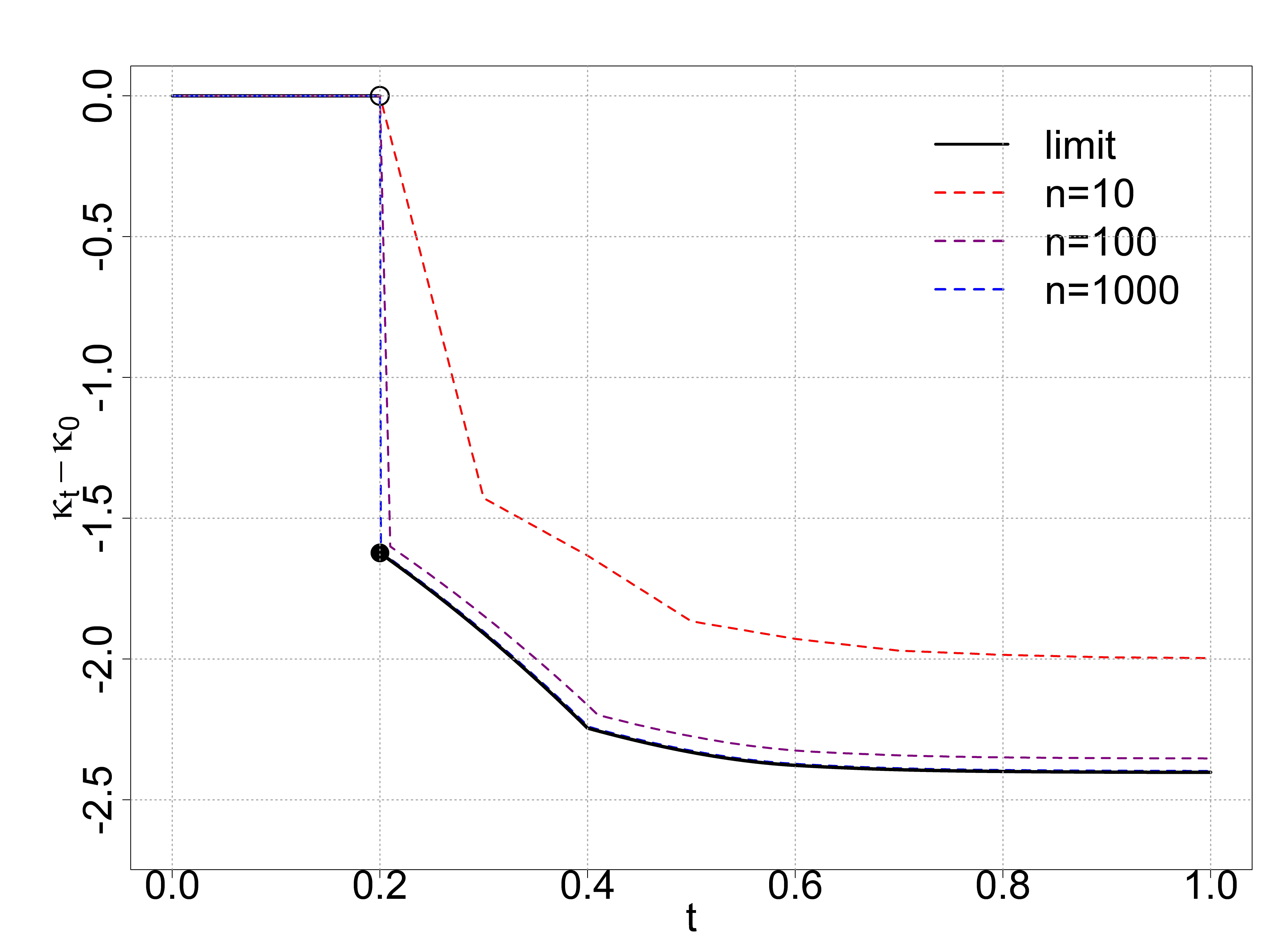}
\includegraphics[width=0.495\columnwidth]{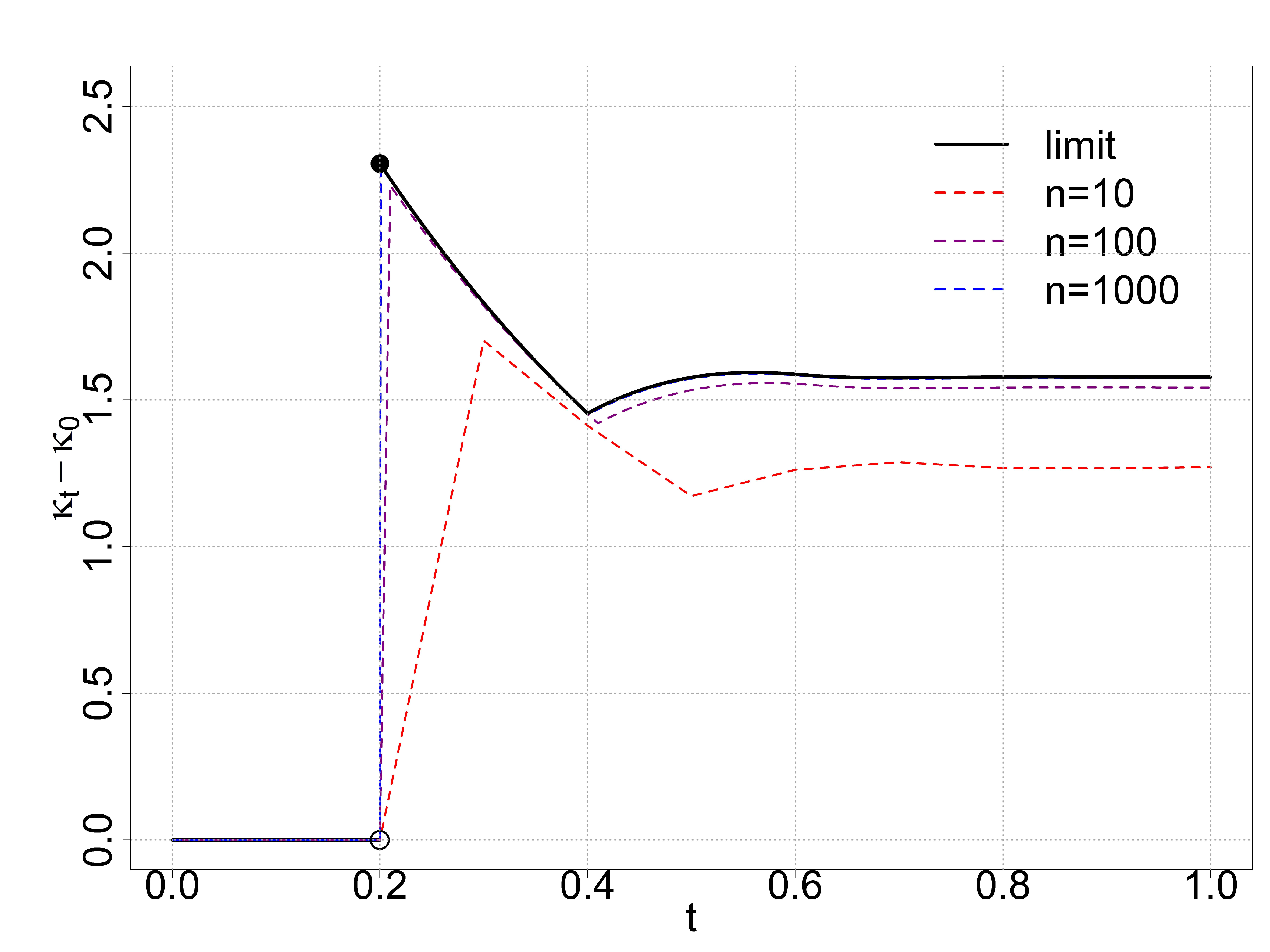}	\caption{\footnotesize 
Top: The scaled values of $\kappa_t-\kappa_0$ shown as a function of $t$ for $\frac{\hat \varsigma^2}{\varsigma^2} = 0.5$ (left) and $\frac{\hat \varsigma^2}{\varsigma^2}=2$ (right), computed for different values of $H$ (different colored curves). The full and empty circles represent the discontinuity point at $t=H$.
 For $\frac{\hat \varsigma^2}{\varsigma^2} < 1$, it holds that $0 < \alpha H < 1$ hence the investment weights $\kappa_t - \kappa_0$ are negative and monotonically decrease with $t$. 
 For $\frac{\hat \varsigma^2}{\varsigma^2} > 1$, we have $\alpha H < 0$ and the investment weights $\kappa_t - \kappa_0$ are positive.\\
Bottom: The scaled values $n b_i$ as a function of $\frac{i}{n}$ for $H=0.2$ and for $\frac{\hat \varsigma^2}{\varsigma^2} = 0.5$ (left) and $\frac{\hat \varsigma^2}{\varsigma^2}=2$ (right), computed for different values of $n$ (different dashed curves), together with the continuous curve $\kappa_t - \kappa_0$ shown as a function of $t$. As $n$ increases, the discrete values approach the continuous limit, and the curves are barely distinguishable for $n=1000$. For $\frac{\hat \varsigma^2}{\varsigma^2} = 0.5 < 1$ ($\frac{\hat \varsigma^2}{\varsigma^2} = 2 > 1$) the investment weights are negative (positive). The investment weights curves are zero for $t < H$ and show a discontinuity at $t = H$, with non-zero weights for $t \geq H$.
 \label{fig:strategy_limit_prop}}	
\end{figure}

 \begin{figure}[!ht]
 \centering
\includegraphics[width=0.95\columnwidth]{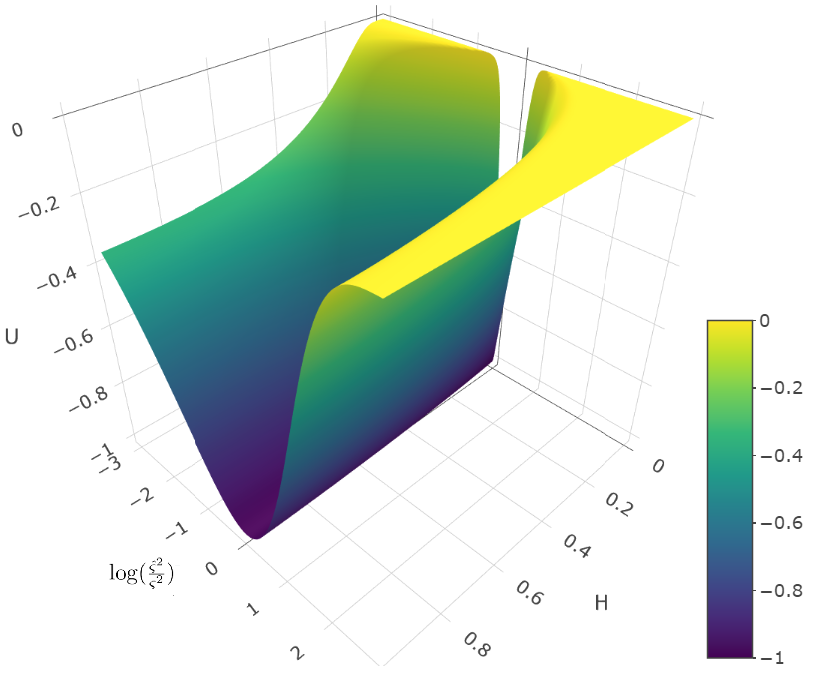}	\caption{\footnotesize 
The limit value $U(H,\varsigma,\hat\varsigma):=\lim_{n\rightarrow\infty} U_n(H,0,\varsigma,\hat\varsigma)$
(shown by color and the z-axis) for the case of zero drift ($\theta=0$)
depends only on the delay $H$ (x-axis) and 
$\log\left(\frac{\hat\varsigma}{\varsigma}\right)$ (y-axis).
We observe the following properties:\\
I. For the case where the market pricing rule 
is consistent with the physical measure (i.e. $\varsigma=\hat\varsigma$)
we have $\alpha=0$ and so $U=-1$.\\
II. If $H\downarrow 0$ then 
$\alpha H\rightarrow 1-\frac{\hat\varsigma}{\varsigma}$. Hence, for the case $\varsigma\neq \hat\varsigma$ it holds that
 $\alpha\left(\frac{\hat\varsigma^2}
{\varsigma^2(1-\alpha H)}+H-1\right)\rightarrow -\infty$ and so,
the right hand side of (\ref{4.0}) goes to $0$.
Namely, for vanishing delay $H\downarrow 0$ we have 
an arbitrage opportunity i.e. $U\rightarrow 0$ provided that there is 
inconsistency between the physical measure and the market pricing rule.
Financially speaking,
since the non delayed Bachelier model is complete, then the above inconsistency 
leads to an arbitrage opportunity.\\
III. Since $U(H,\varsigma,\hat\varsigma)$ is a limit value of discrete-time models, then from the discussion
in Section \ref{sec:2} it follows that for a given $H,\varsigma$
the function $U$ is increasing in $\hat\varsigma$ on the interval $[\varsigma,\infty)$ 
and decreasing in $\hat\varsigma$ on the interval $(0,\varsigma)$.
 Moreover, by using the 
the same arguments as in (\ref{static1})--(\ref{static2}) we obtain that 
if $\left |\log\left(\frac{\hat\varsigma}{\varsigma}\right)\right|\rightarrow\infty$ then $U\rightarrow 0$. 
}
 \label{fig:limit_value}	
\end{figure}

We end this section with the following remark.
\begin{rem}
\label{rem.new}
If $\hat{\varsigma}=\varsigma$ then $\alpha=0$ and so the optimal strategy is
$\Gamma\equiv \frac{\theta}{2\varsigma^2}$, $F\equiv 0$ (i.e. the Merton line)
and does not depend on the delay $H$. 
 For $\hat{\varsigma}\neq \varsigma$ we observe that the right-hand side of (\ref{4.0}) converges to zero as $H\downarrow 0$. In words, if $\hat{\varsigma}\neq \varsigma$ we obtain an asymptotic arbitrage for vanishing delay. Formally, in this case, $\alpha H \rightarrow 1$ (as $H\downarrow 0$)
 and so there is a singularity (blow-up) in the
optimal trading strategy $\pi_{*}=(\Gamma,F)$. 
In Figure \ref{fig:kappa_0_vs_H} and Figure \ref{fig:strategy_limit_prop} (Top) we provide additional (numerical) description of the optimal trading strategy $\pi_{*}=(\Gamma,F)$ for a few cases. 
 In addition, Figure \ref{fig:limit_value} displays the limiting value function from \eqref{4.0} as a function of the delay $H$ and the ratio $\frac{\hat\varsigma}{\varsigma}$ for the case of zero drift $(\theta=0)$. 
\end{rem}

\newpage
\clearpage

\section{Proofs for the Bachelier Model}
\subsection{Proof of Theorem \ref{thm4.1}}\label{sec:5}
 First, we state a simple discrete version of the Gr\"{o}nwall Lemma. Because we could not find a direct reference, we provide the proof for
 the reader's convenience. 
 \begin{lem}\label{lem5.1}
 Let a positive real sequence $\{z_j\}_{j=0}^n$ satisfy the recursive relation 
 \begin{equation}\label{5.rec}
 z_{j+1}\leq \phi z_j+\psi, \ \ j=0,1...,n-1
 \end{equation}
 for some constants $\phi>1$ and $\psi>0$. Then,
 $$z_j\leq \frac{\psi}{\phi-1} (\phi^j-1)+\phi^j z_0, \ \ \forall j. $$
 \end{lem}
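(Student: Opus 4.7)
The plan is to prove the bound by a straightforward induction on $j$, taking advantage of the fact that the claimed right-hand side is precisely the solution of the equality version of the recursion with initial value $z_0$. The base case $j=0$ is immediate, since $\frac{\psi}{\phi-1}(\phi^0-1)+\phi^0 z_0 = z_0$, so the claimed inequality reduces to $z_0\leq z_0$.

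For the inductive step, I would assume the bound at index $j$ and apply (\ref{5.rec}):
$$
z_{j+1}\leq \phi z_j+\psi \leq \phi\left[\frac{\psi}{\phi-1}(\phi^j-1)+\phi^j z_0\right]+\psi = \frac{\psi\phi}{\phi-1}(\phi^j-1)+\phi^{j+1}z_0+\psi.
$$
The remaining task is the algebraic simplification of the $\psi$-terms:
$$
\frac{\psi\phi}{\phi-1}(\phi^j-1)+\psi = \frac{\psi\phi^{j+1}-\psi\phi+\psi(\phi-1)}{\phi-1} = \frac{\psi}{\phi-1}(\phi^{j+1}-1),
$$
which yields the claimed bound at $j+1$ and closes the induction.

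An alternative route, which I mention only because it is equally short, is to iterate the recursion directly: unfolding (\ref{5.rec}) gives $z_j\leq \phi^j z_0+\psi\sum_{k=0}^{j-1}\phi^k$, and summing the geometric series produces the same expression. Either way, there is no real obstacle — the only place a reader might want a line of justification is the geometric-sum rearrangement, which the induction makes automatic. The positivity of the $z_j$ and the conditions $\phi>1$, $\psi>0$ are not actually used in the argument (positivity is only needed to ensure well-posedness of the setup), so the proof goes through verbatim.
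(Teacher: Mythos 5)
Your induction is exactly the paper's argument: base case $j=0$ is trivial, and the inductive step applies the recursion and simplifies the geometric term. The proof is correct and matches the paper's approach.
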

 \begin{proof}
 The proof will be done by induction. For $j=0$ the statement is trivial. Assume the statement holds for $j$, and let us prove it for $j+1$. 
 Indeed, 
 \begin{align*}
 z_{j+1}&\leq \phi z_j+\psi\\
 & \leq \phi \left(\frac{\psi}{\phi-1} (\phi^j-1)+\phi^j z_0\right)+\psi\\
 &=\frac{\psi}{\phi-1} (\phi^{j+1}-1)+\phi^{j+1}z_0
 \end{align*}
 and the proof is completed. 
 \end{proof}

For a constant $C>0$ we use the notation $O\left(1/n^C\right)$ to denote a sequence $\{\alpha_n\}_{n=1}^{\infty}$ which satisfies
$\sup_{n\in\mathbb N} \alpha_n n^C<\infty$.
 Next, we prove the following auxiliary result.
 \begin{lem}\label{lem5.2}
For any $n\in\mathbb N$ introduce the function 
$b^n:[0,1]\rightarrow\mathbb R$ 
\begin{equation*}
b^n_t:=\begin{cases}
			n b_{k+1}, & \text{if \ $\frac{k}{n} \leq t <\frac{k+1}{n}$\ \ $k=0,1,...,n-2$}\\
 n b_n, & \text{if \ $\frac{n-1}{n}\leq t\leq 1$}
		 \end{cases}
\end{equation*}
where $\{b_i\}_{i=1}^{\infty}$ is given by (\ref{eq:a_sol})--(\ref{1.4})
for $D=D_n$, $\sigma=\frac{\varsigma}{\sqrt n}$ and $\hat\sigma=\frac{\hat\varsigma}{\sqrt n}$.

Then, 
$$\int_{0}^1 |b^n_t-\kappa_t|^2 dt = O(1/n).$$
\end{lem}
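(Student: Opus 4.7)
The plan is to show that $b^n$ satisfies, for $t \geq D_n/n$, a discretized analog of the Volterra equation \eqref{prop} satisfied by $\kappa$, and then estimate the difference $e_t := b^n_t - \kappa_t$ by a Gronwall-type argument. Since $b^n$ is piecewise constant, $\sum_{j=1}^{D_n} b_{i-j} = \int_{(i-1-D_n)/n}^{(i-1)/n} b^n_s\, ds$, so the recursion \eqref{1.4} for $i > D_n$ rewrites, for $t \in [(i-1)/n, i/n)$, as
\[
b^n_t \;=\; c_n \int_{W^n_t} b^n_s\, ds, \qquad c_n := \frac{na}{1+aD_n}, \qquad W^n_t := \bigl[\tfrac{i-1-D_n}{n}, \tfrac{i-1}{n}\bigr],
\]
whereas \eqref{prop} reads $\kappa_t = \alpha \int_{t-H}^t \kappa_s\, ds$. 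The endpoints of $W^n_t$ differ from those of $[t-H, t]$ by at most $1/n$.

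First I would establish the asymptotics $na = \alpha/(1-\alpha H) + O(1/n)$ and $c_n = \alpha + O(1/n)$. Substituting $\sigma = \varsigma/\sqrt n$, $\hat\sigma = \hat\varsigma/\sqrt n$ and $D = D_n = \lceil Hn \rceil$ into \eqref{eq:a_sol}, the first summand contributes $r/(2n)$ where $r := \varsigma^2/\hat\varsigma^2$, and a Taylor expansion of the radical yields $\sqrt{\cdots} = nHS + O(1)$ with $S := \sqrt{4(1-H)r + H^2 r^2}$. Summing and using the identity $\alpha/(1-\alpha H) = (Hr + S - 2)/(2H)$ (which rearranges from $\alpha H = 1 - 2/(Hr+S)$) gives the claim, so $b^n_t \equiv na$ is uniformly bounded on $[0, D_n/n)$.

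Next I would argue that $\|b^n\|_\infty \le M$ uniformly in $n$ on all of $[0,1]$. A naive contraction estimate on $\tilde b_i := nb_i$ from the recursion $\tilde b_i = (a/(1+aD_n))\sum_{j=1}^{D_n} \tilde b_{i-j}$ does not suffice once $|\alpha H| \ge 1$, since $|a|D_n/(1+aD_n) \to |\alpha H|$. Instead, mirroring the block structure \eqref{eq:b_sol_recursive}–\eqref{eq:c_coefficients_recursive} of $\kappa$, I would proceed block-by-block across the at most $\lceil 1/H \rceil + 1$ successive blocks of length $D_n$: on each block, $\tilde b_i$ is an explicit linear combination of values on the previous block with coefficients bounded uniformly in $n$ in terms of $H$, $r$ and $\alpha$. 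Starting from the bound on the first block from Step 1, this propagates to a uniform $\|b^n\|_\infty \le M$.

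Finally, on $[0, H)$ one has $|e_t| = |na - \alpha/(1-\alpha H)| = O(1/n)$ uniformly, and on the sliver $[H, D_n/n)$ of length $\le 1/n$ the bound $|e_t| = O(1)$ contributes $O(1/n)$ to $\int_0^1 e_t^2\, dt$. On $[D_n/n, 1]$, subtracting the two integral equations and adding/subtracting $\alpha \int_{W^n_t} b^n_s\, ds$ and $\alpha \int_{[t-H,t]} b^n_s\, ds$, combined with $|c_n - \alpha| = O(1/n)$, $|W^n_t \triangle [t-H,t]| = O(1/n)$, and $\|b^n\|_\infty \le M$, gives
\[
|e_t| \;\le\; \frac{C}{n} + |\alpha| \int_{t-H}^t |e_s|\, ds.
\]
Squaring, applying Cauchy--Schwarz, and extending the inner integral to $[0,t]$ yields $e_t^2 \le 2C^2/n^2 + 2\alpha^2 H \int_0^t e_s^2\, ds$. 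Applying the integral Gronwall inequality to $\Phi(t) := \int_0^t e_s^2\, ds$ and using the initial bound $\Phi(D_n/n) = O(1/n)$ from the boundary intervals delivers $\int_0^1 e_t^2\, dt = O(1/n)$. The main obstacle is the uniform boundedness of $b^n$: the simple contraction argument breaks down whenever $|\alpha H| \ge 1$, and closing the bound requires leveraging the finite-block structure inherited from $\kappa$.
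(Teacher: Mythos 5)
Your route is genuinely different from the paper's: the paper compares $b^n$ and $\kappa$ pointwise at grid points, block by block, via a discrete Gr\"{o}nwall inequality (its Lemma \ref{lem5.1}) and induction on the $\lceil 1/H\rceil$ blocks, whereas you compare the two Volterra integral equations directly in $L^2$ and run a continuous Gr\"{o}nwall on $\Phi(t)=\int_0^t e_s^2\,ds$ after first establishing an a~priori $L^\infty$ bound on $b^n$. Your accounting of the $O(1/n)$ budget is right — the sliver $[H, D_n/n)$ of length $O(1/n)$ where $|e_t|=O(1)$ dominates and yields exactly $O(1/n)$; your asymptotic $na_n=\alpha/(1-\alpha H)+O(1/n)$ and the identity $\alpha/(1-\alpha H)=(Hr+S-2)/(2H)$ check out; and Cauchy--Schwarz plus Gr\"{o}nwall closes the argument given the ingredients. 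One advantage of the paper's route is that uniform boundedness of $b^n$ comes for free (as a byproduct of the $O(1/n)$ pointwise proximity to the bounded $\kappa$), which is precisely the ingredient you must supply separately.

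That supply is where the gap is. The block-by-block boundedness argument you sketch is not established: for $i$ in block $k+1$, the recursion for $\tilde b_i := n b_i$ refers to values both in block $k$ and in block $k+1$, so "an explicit linear combination of values on the previous block with coefficients bounded uniformly in $n$" requires unfolding the recursion across $D_n\to\infty$ steps, and it is not clear the unfolded coefficients stay summably bounded; you offer no estimate. Happily, there is a much simpler route that bypasses the $|\alpha H|\ge 1$ worry entirely (note incidentally that the formula for $\alpha$ forces $\alpha H<1$, so only very negative $\alpha$ is at issue). From \eqref{1.4} one has, for $i>D_n$,
\[
\tilde b_{i+1}-\tilde b_i=\frac{a_n}{1+a_n D_n}\bigl(\tilde b_i-\tilde b_{i-D_n}\bigr),
\]
and since $1+a_n D_n \to 1/(1-\alpha H)>0$ is bounded away from zero, $\bigl|\tfrac{a_n}{1+a_nD_n}\bigr| = \tfrac{|\alpha|}{n}+O(1/n^2)$. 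Hence if $M_i:=\max_{j\le i}|\tilde b_j|$ then $M_{i+1}\le\bigl(1+\tfrac{2|\alpha|}{n}+O(1/n^2)\bigr)M_i$, so $M_n\le M_{D_n}\,e^{2|\alpha|+o(1)}$ with $M_{D_n}=|na_n|=O(1)$, giving $\|b^n\|_\infty=O(1)$ uniformly. With this one-step estimate in place of your block-by-block sketch, your proof is complete.
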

\begin{proof}
The proof will be done in three steps. \\
${}$\\
 \textbf{Step I:}
In this step we reduce the statement of the Lemma to a claim of discrete type.
Recall that the function $\kappa$ is constant on the interval $[0,H)$ and 
continuous on the interval $[H,1]$, hence from (\ref{eq:b_sol_recursive}) 
we conclude that $\kappa$ is Lipschitz continuous on the interval $[H,1]$.
Fix $n$. From the above, by applying the triangle inequality for the $L^2[0,1]$ norm we obtain (recall that $b^n$ is piecewise constant)
\begin{align*}
&\left(\int_{0}^1 |b^n_t-\kappa_t|^2 dt\right)^{\frac{1}{2}}\\
&\leq \left(\frac{1}{n}\sum_{k=0}^{n-1} \left|b^n_{\frac{k}{n}}-\kappa_{\frac{k+1}{n}}\right|^2\right)^{\frac{1}{2}}+
 \left(\sum_{k=0}^{n-1} \int_{\frac{k}{n}}^{\frac{k+1}{n}}\left|\kappa_{\frac{k+1}{n}}-\kappa_t\right|^2 dt\right)^{\frac{1}{2}}\\
 &=\left(\frac{1}{n}\sum_{k=0}^{n-1} \left|b^n_{\frac{k}{n}}-\kappa_{\frac{k+1}{n}}\right|^2\right)^{\frac{1}{2}}+O(1/\sqrt n).
\end{align*}
Since $\kappa$ is not necessarily continuous in $H$, then (in general)
$\int_{\frac{D_n-1}{n}}^{\frac{D_n}{n}}\left|\kappa_{\frac{D_n}{n}}-\kappa_t\right|^2 dt=O(1/n)$. This is the reason why 
 $$\left(\sum_{k=0}^{n-1} \int_{\frac{k}{n}}^{\frac{k+1}{n}}\left|\kappa_{\frac{k+1}{n}}-\kappa_t\right|^2 dt\right)^{\frac{1}{2}}=O(1/\sqrt n)$$
 and not $O(1/n)$.

We conclude that in order to prove the Lemma it is sufficient to establish that
$\max_{0\leq k< n}\left|b^n_{\frac{k}{n}}-\kappa_{\frac{k+1}{n}}\right|=O(1/n)$. This is equivalent to showing that
\begin{equation}\label{5.10} 
\Delta^n_i=O(1/n), \ \ \forall i=1,...,\lceil 1/H\rceil
\end{equation}
where 
$$\Delta^n_i :=\max_{(i-1)D_n\leq k<(i D_n)\wedge n}|b^n_{\frac{k}{n}}-\kappa_{\frac{k+1}{n}}|.$$
\\
${}$\\
 \textbf{Step II:}
In this step we derive some preliminary estimates. 
Let $a_n$ be the $a$ given by (\ref{eq:a_sol}) 
for $D=D_n$, $\sigma=\frac{\varsigma}{\sqrt n}$ and $\hat\sigma=\frac{\hat\varsigma}{\sqrt n}$.
Observe that, 
\begin{align}\label{4.4}
 a_n &:= \frac{\varsigma^2}{2n\hat\varsigma^2}+\frac{\sqrt{\left(2D_n-\frac{D^2_n \varsigma^2}{n\hat\varsigma^2}\right)^2- 4 D^2_n \left(1-\frac{\varsigma^2}{\hat\varsigma^2}\right)}-2D_n}{2 D^2_n}+O(1/n^2)\nonumber\\
 &= \left(\frac{\varsigma^2}{2\hat\varsigma^2}-\frac{1}{H}+
 \frac{\sqrt{4\frac{\varsigma^2}{\hat\varsigma^2}(1-H)+H^2\frac{\varsigma^4}{\hat\varsigma^4}}}{2 H}\right)\frac{1}{n}+O(1/n^2)\nonumber\\
 &= \frac{\alpha}{ (1-\alpha H) }\frac{1}{n}+O(1/n^2).
 \end{align}

Next, (\ref{1.4}) yields that for any $k\geq D_n-1$ 
$$b^n_{\frac{k+1}{n}}-b^n_{\frac{k}{n}}=\frac{a_n}{a_nD_n+1}\left(b^n_{\frac{k}{n}}-b^n_{\frac{k-D_n}{n}}\right).$$
Thus,
\begin{equation}\label{5.11}
b^n_{\frac{k+1}{n}}=\left(1+\frac{a_n}{a_nD_n+1}\right)b^n_{\frac{k}{n}}-
\frac{a_n}{a_nD_n+1}b^n_{\frac{k-D_n}{n}}, \ \ \forall k\geq D_n-1.
\end{equation}

From (\ref{prop}) we have that for any $k\geq D_n$
$$\kappa_{\frac{k+1}{n}}-\kappa_{\frac{k}{n}}=
\alpha\left(\int_{\frac{k}{n}}^{\frac{k+1}{n}}\kappa_s ds-
\int_{\frac{k}{n}-H}^{\frac{k+1}{n}-H}\kappa_s ds
\right).$$
This, together with (\ref{4.4})
and the fact that 
the function $\kappa$ is constant on the interval $[0,H)$ and Lipschitz continuous 
on the interval $[H,1]$
yield that for any $k\geq D_n$ such that $k\neq 2D_n-1,2 D_n-2$ 
\begin{align*}
\kappa_{\frac{k+1}{n}}-\kappa_{\frac{k}{n}}&=\frac{\alpha}{n}\left(\kappa_{\frac{k}{n}}-\kappa_{\frac{k-D_n}{n}}\right)+O(1/n^2)\\
&=\frac{a_n}{a_nD_n+1}\left(\kappa_{\frac{k}{n}}-\kappa_{\frac{k-D_n}{n}}\right)+O(1/n^2).
\end{align*}
We exclude $k=2D_n-1,2 D_n-2$ because for these $k$ the discontinuity point of $\kappa$ might lie between 
$\frac{k-D_n}{n}$ and $\frac{k+1}{n}-H$. Namely, 
$\frac{k-D_n}{n}\leq H\leq \frac{k+1}{n}-H$.

We conclude that for any $k\geq D_n$ such that $k\neq 2D_n-1,2 D_n-2$
\begin{equation}\label{5.12}
\kappa_{\frac{k+1}{n}}=\left(1+\frac{a_n}{a_nD_n+1}\right)\kappa_{\frac{k}{n}}-
\frac{a_n}{a_nD_n+1}\kappa_{\frac{k-D_n}{n}}+O(1/n^2), \ \ \forall k\geq D_n-1.
\end{equation}
\\
${}$\\
 \textbf{Step III:}
In this step we complete the proof. We establish (\ref{5.10}) by induction on $i$.
Observe that $\kappa_H=\frac{\alpha^2 H}{1-\alpha H}$ and 
$b^n_{\frac{D_n-1}{n}}=\frac{a^2_n D_n}{a_n D_n+1}$. Thus, 
from (\ref{4.4}) 
it follows that $\Delta^n_1=O(1/n)$. 
Assume that the statement holds true for $i$ and let us prove it for $i+1$. 
Define the sequence (which depends on $n$ and $i$)
$$z^{n,i}_j=\left|b^n_{\frac{i D_n+j-1}{n}}-\kappa_{\frac{ i D_n+j}{n}}\right|, \ \ j=0,1,...,D_n\wedge (n-i D_n).$$
Clearly, $z^{n,i}_0\leq \Delta_i$ and so from the induction assumption
$z^{n,i}_0= O(1/n)$.

From \eqref{5.11},\eqref{5.12} we obtain that for $k<D_n-2$.
\begin{align*}
z^{n,i}_{k+1} &= \left(1+\frac{a_n}{a_nD_n+1}\right)z^{n,i}_k+\frac{a_n}{a_nD_n+1}\left|b^n_{\frac{(i-1) D_n+k-1}{n}}-\kappa_{\frac{(i-1) D_n+k}{n}}\right|+O(1/n^2)\\
&=\left(1+\frac{a_n}{a_nD_n+1}\right)z^{n,i}_k+O(1/n^2)
\end{align*}
where the last equality follows from the fact that $\Delta^n_i=O(1/n)$ (induction assumption) and
$\frac{a_n}{a_nD_n+1}=O(1/n)$. 
We take $k<D_n-2$
only because of the case $i=1$ (recall 
that in (\ref{5.12}) we exclude $k=2D_n-1,2 D_n-2$).

Thus, the sequence $\{z^{n,i}_j\}_{j=0}^{D_n-2}$ satisfies the recursive relation 
 (\ref{5.rec}) for 
$\phi=1+\frac{a_n}{a_nD_n+1}=1+\frac{\alpha}{n}+O(1/n^2)$ and 
$\psi=O(1/n^2)$. Hence, from Lemma \ref{lem5.1} we obtain 
that 
\begin{equation}\label{5.14}
\max_{0\leq j\leq D_n-2} z^{n,i}_j=O(1/n).
\end{equation} 

It remains to treat the term 
$z^{n,i}_j$ for $j= D_n-2, D_n-1,D_n$.
Once again, since $\kappa$ is constant on the interval $[0,H)$ and Lipschitz continuous 
on the interval $[H,1]$, then from (\ref{4.4}) we have 
$$\kappa_{\frac{i D_n+j}{n}}=\frac{a_n}{a_n D_n+1}
\sum_{r=1}^{D_n} \kappa_{\frac{i D_n+j-r}{n}}+O(1/n), \ \ j= D_n-2, D_n-1,D_n.$$
This together with (\ref{1.4}), (\ref{5.14}) and the induction assumption $\Delta^n_i=O(1/n)$ gives 
$z^{n,i}_j=O(1/n)$ for $j= D_n-2, D_n-1,D_n$
and the proof is completed.
\end{proof}

Now, we are ready to prove Theorem \ref{thm4.1}.

\begin{proof}
 Choose $n\in\mathbb N$. We apply Theorem \ref{thm1.1} to the sequence 
 $S_k:=P_{k/n}$, $k=0,1...,n$. 
Thus, let $D=D_n$, $\mu=\frac{\theta}{n}$, $\sigma=\frac{\varsigma}{\sqrt n}$, $\hat\sigma=\frac{\hat\varsigma}{\sqrt n}$.
From (\ref{1.6}) and (\ref{4.4}) we obtain
\begin{align*}
&\lim_{n\rightarrow\infty} U_n(H,\mu,\varsigma,\hat\varsigma)\\
&=-\exp\left(\frac{1}{2}\left(-\frac{\theta^2}{\varsigma^2}+\lim_{n\rightarrow\infty}\left(na_n\frac{\hat\varsigma^2}{\varsigma^2}-\frac{a_n(n-D_n)}{1+a_n D_n}\right)\right)\right)\frac{1}{\sqrt{1+\lim_{n\rightarrow\infty} a_n D_n}}\\
&=-{\exp\left(\frac{1}{2}\left(-\frac{\theta^2}{\varsigma^2}+
\alpha\left(\frac{\hat\varsigma^2}
{\varsigma^2(1-\alpha H)}+H-1\right)\right)\right)}\sqrt{1-\alpha H}
\end{align*}
and (\ref{4.0}) follows. 

Next, by combining (\ref{1.5}) (for the function $f^{*}$) and (\ref{4.4}) we get (\ref{4.2}). 

Finally, we establish (\ref{4.3}).
Set 
$$\Phi_n=\max_{0\leq k\leq n-1}\left|\gamma^{*,n}_{\frac{k}{n}}-\Gamma_{\frac{k}{n}}\right|, \ \ n\in\mathbb N.$$
The function $\kappa$ is constant on the interval $[0,H)$ and Lipschitz continuous 
on the interval $[H,1]$ and so, from \cite{Fernique}
we obtain that the Volterra Gaussian process $\{\Gamma_t\}_{t=0}^1$ is continuous. 
Hence (recall that $\gamma^{*,n}$ is piecewise constant), in order to establish (\ref{4.3}) it is sufficient to show that 
$\lim_{n\rightarrow\infty}\Phi_n=0$ a.s. 

In view of the Borel–Cantelli lemma
 this will follow from the inequality
$$\sum_{n=1}^{\infty}\mathbb P\left(\Phi_n>\epsilon\right)<\infty, \ \ \forall \epsilon>0.$$

Clearly, the Markov inequality (for $\Phi^6_n$) implies that 
$$\sum_{n=1}^{\infty}\mathbb P\left(\Phi_n>\epsilon\right)\leq \frac{1}{\epsilon^6}
\sum_{n=1}^{\infty} \mathbb E_{\mathbb P}\left[\Phi^6_n\right].$$
Thus, it remains to establish the inequality
$\sum_{n=1}^{\infty} \mathbb E_{\mathbb P}\left[\Phi^6_n\right]<\infty.$

To this end we apply Lemma \ref{lem5.2}. Fix $n$. 
By applying (\ref{1.5}) for 
$S_k:=P_{k/n}$, $k=0,1...,n$ it follows that 
$$\gamma^{*,n}_{\frac{k}{n}}=\frac{\theta}{\varsigma^2}+\frac{1}{\varsigma^2}\int_{0}^{k/n}\left(b^n_{\frac{k}{n}-s}-na_n\right)dP_s$$
where $b^n$ and $a_n$ are as before. 
Thus, from (\ref{vol}), Lemma \ref{lem5.2} and the fact that $\int_{t_1}^{t_2}\zeta_s dw_s\sim \mathcal N\left(0,\int_{t_1}^{t_2}\zeta^2_s ds\right)$ for a deterministic $\zeta$ we obtain
\begin{align*}
\mathbb E_{\mathbb P}\left[\Phi^6_n\right] &\leq \sum_{k=0}^{n-1} \mathbb E_{\mathbb P}\left[\left|\gamma^{*,n}_{\frac{k}{n}}-\Gamma_{\frac{k}{n}}\right|^6\right]\\
&=\frac{1}{\varsigma^2}\sum_{k=0}^{n-1} \mathbb E_{\mathbb P}\left[\left|\int_{0}^{k/n} \left(\left(b^n_{\frac{k}{n}-s}-na_n\right)-\left(\kappa_{\frac{k}{n}-s}-\frac{\alpha}{1-\alpha H}\right)\right)dP_s\right|^6\right]\\
&= n O(1/n^3)=O(1/n^2)
\end{align*}
and the proof is completed. 
\end{proof} 

\subsection{Proof of the Upper Bound (\ref{9.2})}\label{sec:6}
Let $\{\mathcal F_t\}_{t=0}^1$, $t\in [0,1]$ be the augmented filtration which is generated by $\{W_t\}_{t=0}^1$.
We start with 
the following key lemma. 
\begin{lem}\label{lem6.1}
Let $\epsilon>0$ and $\gamma=(\gamma_t)_{t\in [0,1]}$ be a predictable process with respect to the filtration 
$\left(\mathcal F_{(t-\epsilon)^{+}}\right)_{t\in [0,1]}$ which satisfies $\int_{0}^1\gamma^2_t dt<\infty$ a.s.
If the expectation of the negative part satisfies $\mathbb E_{\mathbb P}\left[\left(\int_{0}^1 \gamma_t dW_t\right)^{-}\right]<\infty$ then 
$\mathbb E_{\mathbb P}\left[\int_{0}^1 \gamma_t dW_t\right]=0.$
 \end{lem}
\begin{proof}
Choose $n\in\mathbb N$ such that $\frac{1}{n}<\epsilon$. 
Let us prove by backward induction that for any $k=0,1,...,n$ 
\begin{equation}\label{6.2}
\mathbb E_{\mathbb P}\left[\int_{0}^{1} \gamma_t dW_t\left.|\right.\mathcal F_{\frac{k}{n}}\right]=
\int_{0}^{\frac{k}{n}}\gamma_t dW_t.
\end{equation}
For $k=n$ (\ref{6.2}) is obvious. Assume that (\ref{6.2}) is correct for $k$, let us prove it for $k-1$. 
From the Jensen inequality, 
the fact that $\mathbb E_{\mathbb P}\left[\left(\int_{0}^1 \gamma_t dW_t\right)^{-}\right]<\infty$ and the induction assumption 
it follows that $\mathbb E_{\mathbb P}\left[\left(\int_{0}^{\frac{kT}{n}} \gamma_t dW_t\right)^{-}\right]<\infty$ hence the term
$\mathbb E_{\mathbb P}\left[\int_{0}^{\frac{k}{n}} \gamma_t dW_t\left|\right. \mathcal F_{\frac{k-1}{n}}\right]$ is well defined. 
Moreover, the tower property and the induction assumption imply
$$\mathbb E_{\mathbb P}\left[\int_{0}^{1} \gamma_t dW_t\left|\right.\mathcal F_{\frac{k-1}{n}}\right]
=\mathbb E_{\mathbb P}\left[\int_{0}^{\frac{k}{n}} \gamma_t dW_t\left.|\right.\mathcal F_{\frac{k-1}{n}}\right].$$
Thus, in order to prove (\ref{6.2}) for $k-1$ it remains to establish that 
\begin{equation}\label{past}
\mathbb E_{\mathbb P}\left[\int_{\frac{k-1}{n}}^{\frac{k}{n}} \gamma_t dW_t\left|\right. \mathcal F_{\frac{k-1}{n}}\right]=0.
\end{equation}
Indeed, since $\frac{1}{n}<\epsilon$ then $(\gamma_t)_{t\in \left[\frac{k-1}{n},\frac{k}{n}\right]}$ is $\mathcal F_{\frac{k-1}{n}}$ measurable, 
and so, the regular conditions distribution (for details see \cite{shiryaev2016probability}) of $\int_{\frac{k-1}{n}}^{\frac{k}{n}} \gamma_t dW_t$ with respect to 
$\mathcal F_{\frac{k-1}{n}}$ is the normal distribution $\mathcal N\left(0,\int_{\frac{k-1}{n}}^{\frac{k}{n}} \gamma^2_t dt\right)$.
Thus, by using (a priori) generalized conditional expectations (as defined in \cite{shiryaev2016probability}) we obtain 
(\ref{past}).
This completes the proof of (\ref{6.2}). Taking $k=0$ in (\ref{6.2}) we obtain 
$\mathbb E_{\mathbb P}\left[\int_{0}^1 \gamma_t dW_t\right]=0.$
\end{proof}
\begin{rem}
Let us notice that for $\epsilon=0$ Lemma \ref{lem6.1} does not hold true. Indeed, 
the doubling strategies from \cite{harrison1979martingales} are piecewise constant with a (random) finite number of trading times
such that the corresponding portfolio value at the maturity date is equal to $1$ almost surely. 
\end{rem}

Next, we will need some preparation. 
It is well known (see \cite{H:68}) that for any 
deterministic and measurable function $l_{t,s}$, $0\leq s\leq t\leq 1$ (Volterra kernel) 
belonging to $L^2([0,1]^2)$
there exists a probability measure 
$\mathbb Q(l)\sim\mathbb P$ 
and a process $W^{\mathbb Q(l)}=(W^{\mathbb Q(l)}_t)_{t\in [0,1]}$ such that 
$W^{\mathbb Q(l)}$ is a $\mathbb Q(l)$-Brownian motion and we have the equality
\begin{equation}\label{gir}
W^{\mathbb Q(l)}_t=W_t+\int_{0}^t\int_{0}^s l_{s,u} dW^{\mathbb Q(l)}_u ds, \ \ t\in [0,1].
\end{equation}
From the Girsanov theorem 
$$\frac{d\mathbb Q(l)}{d\mathbb P}=\exp\left(-\int_{0}^1 \int_{0}^t l_{t,s} dW^{\mathbb Q(l)}_sdW^{\mathbb Q(l)}_t+
\frac{1}{2}\int_{0}^1\left(\int_{0}^t l_{t,s} dW^{\mathbb Q(l)}_s\right)^2 dt\right). $$
Hence, 
from the It\^{o} Isometry and the Fubini theorem
we obtain that the relative entropy is given by 
\begin{equation}\label{2+.entropy}
\mathbb E_{\mathbb Q(l)}\left[\log\left(\frac{d\mathbb Q(l)}{d\mathbb P}\right)\right]
=\frac{1}{2}\int_{0}^1\int_{0}^t l^2_{t,s}ds dt=\frac{1}{2}\int_{0}^1\int_{s}^1 l^2_{t,s}dt ds.
\end{equation}
For any $\epsilon>0$ denote by $\Gamma_{\epsilon}$ the set of all deterministic, measurable and bounded functions $l_{t,s}$, $0\leq s\leq t\leq 1$ 
such that $l_{t,s}=0$ if $t>s+H-\epsilon $ and
\begin{equation}\label{variance}
\int_{0}^1\left(1-\int_{s}^1 l_{t,s} dt\right)^2 ds=\hat{\varsigma}^2.
\end{equation}

Now, we are ready to prove (\ref{9.2}).

\begin{proof}
The proof will be done in three steps. 
\\
${}$\\
 \textbf{Step I:}
 We start with reducing the proof to the special case where $P_0=\theta=0$ and $\varsigma=1$,
i.e. $P=W$. By considering the map $P\rightarrow \frac{P-P_0}{\varsigma}$ and 
$\mathcal N(P_0, \hat{\varsigma}^2)\rightarrow \mathcal N\left(0, \frac{\hat{\varsigma}^2}{\varsigma^2}\right)$
it follows that if (\ref{9.2}) holds true for the case
$P_0=0$ and $\varsigma=1$ then it holds for the general case. Thus, it remains to treat the drift $\theta$.
Assume that $P_0=0$, $\varsigma=1$ and 
introduce the probability measure 
$\tilde{\mathbb P}$ by 
$\frac{d\tilde{\mathbb P}}{d\mathbb P}:=\exp\left(-\theta P_T+\frac{\theta^2}{2}\right).$
From the Girsanov theorem it follows that $P$ is a standard Brownian motion under 
$\tilde{\mathbb P}$.
This property together with the simple equality 
$$\mathbb E_{\mathbb P}\left[-\exp\left(-V^{{\gamma,f}}_1\right)\right]=
 \exp\left(-\frac{\theta^2}{2} \right)\mathbb E_{\tilde{\mathbb P}}\left[-\exp\left(-V^{{\gamma-\theta,f}}_T\right)\right] 
 \ \ \forall (\gamma,f)\in\mathfrak A
$$
yield that if Theorem (\ref{9.2}) holds true for $P=W$ then it holds true for the case of constant drift as well.
Hence from now on we assume that $P=W$. 
\\
${}$\\
\textbf{Step II:}
Let $\epsilon>0$. In this step we prove that for any 
$l\in\Gamma_{\epsilon}$ and 
$\pi\in\mathfrak A$ with $\mathbb E_{\mathbb P}\left[\exp\left(-V^{ \pi}_1\right)\right]<\infty$ we have
$\mathbb E_{\mathbb Q(l)}[V^{\pi}_1]=0.$

Choose $l\in\Gamma_{\epsilon}$, 
$\pi\in\mathfrak A$ with $\mathbb E_{\mathbb P}\left[\exp\left(-V^{ \pi}_1\right)\right]<\infty$
 and denote $\mathbb Q:=\mathbb Q(l)$.
From the Fubini theorem and (\ref{gir}) it follows that 
$W_1=\int_{0}^1\left(1-\int_{s}^1 l_{t,s}dt\right)dW^{\mathbb Q}_s.$
This together with (\ref{variance}) imply 
that $(W_1;\mathbb Q) \sim \hat{\mathcal V}$.
Let us show 
that $\mathbb E_{\mathbb Q}\left[\left(\int_{0}^1 \gamma_t dW_t\right)^{-}\right]<\infty$.
Indeed, in view of the relations 
$\int f d \hat{\mathcal V}<\infty$ and $(W_1;\mathbb Q) \sim \hat{\mathcal V}$ we have 
$f(W_T)\in L^1(\mathbb Q)$ and 
\begin{equation}\label{6+.3}
\mathbb E_{\mathbb Q} [f(W_1)]=\int f d \hat{\mathcal V}. 
\end{equation}

From the classical Legendre-Fenchel duality inequality
$xy \leq e^x + y(\log y-1)$ for $x=\left(V^{\pi}_1\right)^{-}$ 
and $y=\frac{d\mathbb Q}{d\mathbb P}$ we obtain 
$
\mathbb E_{\mathbb Q}\left[\left(V^{\pi}_1\right)^{-}\right]\leq 
\mathbb E_{\mathbb P}\left[\exp\left(-V^{ \pi}_1\right)\right]+
\mathbb E_{\mathbb Q}\left[\log\left(\frac{d\mathbb Q}{d\mathbb P}\right)\right]
<\infty.
$
This together with 
(\ref{6+.3}) yield
$\mathbb E_{\mathbb Q}\left[\left(\int_{0}^1 \gamma_t dW_t\right)^{-}\right]<\infty$.
Finally, 
$$\mathbb E_{\mathbb Q}\left[\int_{0}^1\gamma_t dW_t\right]=
\mathbb E_{\mathbb Q}\left[\int_{0}^1\left(\gamma_s-\int_{s}^{s\vee\left(s+H-\epsilon\right)}\gamma_t l_{t,s}dt\right) dW^{\mathbb Q}_s\right]=0.$$
The first equality follows from the Fubini theorem and the fact that $l\in\Gamma_{\epsilon}$.
 The second equality is argued by observing that the process
$\left(\gamma_s-\int_{s}^{s\vee\left(s+H-\epsilon\right)}\gamma_tl_{t,s}dt\right)$, $s\in [0,1]$
 is a predictable process with respect to the filtration 
 $\mathcal F_{(s-\epsilon)^+}$ and applying Lemma \ref{lem6.1}.
 This completes the second step. 
\\
${}$\\
\textbf{Step III:}
First, in view of \textbf{Step II}, by applying the same arguments as after 
(\ref{2.6}) we obtain that for any $\epsilon>0$
\begin{equation}\label{6.4}
\log\left(\mathbb E_{\mathbb P}\left[\exp\left(-V^{\pi}_1\right)\right] \right)\geq -\mathbb E_{\mathbb Q(l)}\left[\log\left(\frac{d\mathbb Q(l)}{d\mathbb P}\right)\right] \ \ \forall (\pi,l)\in\mathfrak A\times\Gamma_{\epsilon}.
\end{equation}
Set 
$\hat a:=\frac{\alpha}{1-\alpha H}$
where $\alpha$ is given by \eqref{eq:alpha_def}. Direct computations yield 
\begin{equation}\label{comp}
(1-\alpha H)(1+\hat a H)=1 \ \ \mbox{and} \ \ 
\frac{1+\hat aH^2}{(1+\hat a H)^2}=\frac{\hat{\varsigma}^2}{\varsigma^2}.
\end{equation}
Next, for any $\epsilon>0$ define $l^{\epsilon}\in\Gamma_{\epsilon}$ by
$l^{\epsilon}_{t,s}:=
\frac{\hat a\mathbb I_{ t<s+H-\epsilon}}{1+\hat a\left(H\wedge (1-s)\right)}$, $0\leq s\leq t\leq 1.$
By combining (\ref{2+.entropy}) and (\ref{6.4}) and taking $\epsilon\downarrow 0$ we conclude that 
\begin{equation}\label{6.5}
\sup_{\pi\in\mathfrak A}\mathbb E_{\mathbb P}\left[-\exp\left(-V^{\pi}_1\right)\right] \leq -\exp\left(
\frac{1}{2} \int_{0}^1 \frac{\hat a^2\left(H\wedge(1-s)\right)}{\left(1+\hat a\left(H\wedge(1-s)\right)\right)^2}ds\right).
\end{equation}
Observe that
\begin{align}\label{6.6}
&\int_{0}^1 \frac{\hat a^2\left(H\wedge(1-s)\right)}{\left(1+\hat a\left(H\wedge(1-s)\right)\right)^2}ds\nonumber\\
 &=\frac{\hat a^2 H (1-H)}{\left(1+\hat a H\right)^2}+\hat a^2\int_{1-H}^1 \frac{1-s}{\left(1+\hat a (1-s)\right)^2}ds\nonumber\\
 &= \frac{\hat a}{1+\hat a H}\frac{\hat aH(1-H)}{1+\hat a H}
 -1+\frac{1}{1+\hat aH}+ \log(1+\hat a H)\nonumber\\
 &=\alpha \left(1-\frac{\hat\varsigma^2}{\varsigma^2 (1-\alpha H)}\right)-\alpha H
 -\log(1-\alpha H)
 \end{align}
where the first two equalities are obtained by simple computations and the 
last equality follows from (\ref{comp}).
Finally, by combining (\ref{6.5})-(\ref{6.6}) we obtain (\ref{9.2}) and complete the proof. 
\end{proof}

\appendix
\section{Computational Details} \label{sec:appendix}

We provide a \href{https://github.com/orzuk/BandedDecomposition}{github} repository with freely available $R$ code that allows to numerically compute the value $U_n$ and its continuous limit, as well as the strategy weights $b_i$ and the continuous limit curve $\kappa$ as functions of the problems parameters, all implemented as $R$ functions in the file {\it `SemiStatic.R`}. The code for generating the figures shown in the paper is in the file {\it `RunSemiStatic.R`}.

The first $5$ coefficients $c_k$ defined in \eqref{eq:c_coefficients_recursive}, computed using the symbolic Maple program {\it `ContinuousLimitRecurrence.mw`} available in the repository, are shown below. 

{\footnotesize
\begin{align*}
c_1 &= -\alpha \\
c_2 &=-{\mathrm e}^{ \alpha H} \alpha \\
c_3 &= {\mathrm e}^{ \alpha H} \alpha ( \alpha H -{\mathrm e}^{ \alpha H}) \\
c_4 &= \frac{{\mathrm e}^{ \alpha H} \alpha (-H^{2} \alpha^{2}+4 \,{\mathrm e}^{ \alpha H} H \alpha -2 \,{\mathrm e}^{2 \alpha H})}{2} \\
c_5 &= \frac{{\mathrm e}^{ \alpha H} (-6 \,{\mathrm e}^{3 \alpha H}+(18 \,{\mathrm e}^{2 \alpha H}+ \alpha H ( \alpha H -12 \,{\mathrm e}^{ \alpha H})) \alpha H) \alpha}{6} 
\end{align*}
}

\newpage
\clearpage

\bibliographystyle{siam}
\bibliography{finance}

\end{document}